\providecommand{\U}[1]{\protect \rule{.1in}{.1in}}
\newtheorem{theorem}{Theorem}[section]
\newtheorem{proposition}[theorem]{Proposition}
\newenvironment{proof}[1][Proof]{\noindent \textbf{#1.} }{\  \rule{0.5em}{0.5em}}
\begin{document}
\title{Static Nuel Games with Terminal Payoff}
\author{Symeon Mastrakoulis and Athanasios Kehagias}
\maketitle
\begin{abstract}
In \ this paper we study a variant of the \emph{Nuel} game (a generalization
of the \emph{duel}) which is played in turns by $N$ players. In each turn a
single player \emph{must} fire at one of the other players and has a certain
probability of hitting and killing his target. \ The players shoot in a fixed
sequence and when a player is eliminated, the \textquotedblleft
move\textquotedblright \ passes to the next\ surviving player. The winner is
the last surviving player. We prove that, for every $N\geq2$, the Nuel has a
stationary Nash equilibrium and provide algorithms for its computation.
\end{abstract}


\section{Introduction\label{sec01}}

In \ this paper we study a variant of the \emph{nuel} game (a generalization
of the \emph{duel}) which is played in turns by $N$ players. In each turn a
single player \emph{must} fire at one of the other players (in other words,
\emph{abstention} is not allowed)\ and has a certain probability of hitting
and killing his target. \ The players shoot in a fixed sequence and when a
player is eliminated, the \textquotedblleft move\textquotedblright \ passes to
the next\ surviving player. The winner is the last surviving player.

In what follows, we will use the term \textquotedblleft$N$%
-uel\textquotedblright \ to describe the $N$-player game; hence the $2$-uel or
\emph{duel} involves two players, the $3$-uel or \emph{truel} involves three
players etc.

Early works on the static $3$-uel are
\cite{Gardner1966,Kinnaird1946,Larsen1948,Mosteller1987,Shubik1954} in which
the postulated game rules guarantee the existence of \emph{exactly one}
survivor (\textquotedblleft winner\textquotedblright).{} A more general
analysis appears in \cite{Knuth1973} which considers the possibility of
\textquotedblleft cooperation\textquotedblright \ between the players. This
idea is further studied in
\cite{Kilgour1971,Kilgour1975,Kilgour1977,Zeephongsekul1991}. Recent papers on
the truel include
\cite{Amengual2005a,Amengual2005b,Amengual2006,Bossert2002,Brams1997,Brams2001,Brams2003,Dorraki2019,Toral2005,Wegener2021,Xu2012}%
. Only a few of these papers
\cite{Amengual2005a,Amengual2005b,Amengual2006,Zeephongsekul1991} make short
remarks on the general $N$-uel (i.e., for $N\geq3$) and of these only
\cite{Zeephongsekul1991} but the matter is not pursued in depth. \footnote{Let
us also note the existence of an extensive literature on a quite different
type of duel games, which essentially are \emph{games of timing}
\cite{Barron2013,Dresher1961,Karlin1959}. However, this literature is not
relevant to the game studied in this paper.}

By \emph{solving} the $N$-uel, we mean establishing the existence of one or
more \emph{Nash equilbria} (NE)\ and computing these equilbria. As will be
explained in a later section, solving the $N$-uel involves solving a system of
nonlinear equations or, equivalently, minimizing the total \emph{equilibrium
error}. Hence the problem is essentially one of global optimization.

The paper is organized as follows. In \ Section \ref{sec02} we present the
rules of the game. In Section \ref{sec03} we set up the $N$-uel \emph{system
of payoff equations} and prove that, for every $N$ and under appropriate
conditions, this system always has a unique solution, i.e., every $N$-uel has
uniquely defined payoffs for all players. In Section \ref{sec04} we prove that
for every $N$, under appropriate conditions, the $N$-uel has a 
\emph{stationary equilbrium strategy profile} and we provide algorithms to
compute this profile (equivalently, to solve the payoff equations system). In
Section \ref{sec05} we present illustrative experiments. In Section
\ref{sec06}\ we conclude and present some future research directions.

\section{The $N$-uel Game\label{sec02}}

We will now define the $N$-uel game rigorously for every $N\in \left \{
2,3,...\right \}  $. The game involves $N$ players, who will be denoted by
$P_{1},...,P_{N}$ or by $1,...,N$, and evolves in discrete times (turns)
$t\in \left \{  0,1,2,...\right \}  $. For each $n\in \{1,...,N\}$, $P_n$
has a \emph{marksmanship} $p_n$, which is the probability that he hits (and kills) 
the opponent whom he shoots. In what follows we assume that, for all  $n\in \{1,...,N\}$, 
$p_n$ is \emph{strictly} between 0 and 1.
In the next subsections we define the components of the game.

\subsection{States and Actions}

Each \emph{game state} has the form $\mathbf{s}=s_{0}s_{1}...s_{N}$, where
\ $s_{0}\in \left \{  0,1,...,N\right \}  $ and, for $n\in \left \{
1,...,N\right \}  $, $s_{n}=0$ (resp. 1) means that $P_{n}$ is dead (resp.
alive). Now suppose that at time $t\in \left \{  0,1,2,...\right \}  $ the game
state is $\mathbf{s}\left(  t\right)  =s_{0}\left(  t\right)  ...s_{N}\left(
t\right)  $. We have the following possibilities.

\begin{enumerate}
\item If $s_{0}\left(  t\right)  =n\in \left \{  1,...,N\right \}  $, then the
game is in progress and $P_{n}$ \textquotedblleft has the
move\textquotedblright, i.e., $P_{n}$ is the single player who will shoot in
the current turn. In this case we will also have $s_{1}\left(  t\right)
...s_{N}\left(  t\right)  \in \left \{  0,1\right \}  ^{N}$; 
$s_{i}\left( t\right)  =1$ means that $P_{i}$ is alive and $s_{i}\left(  t\right)  =0$
means that he is dead.

\item If $s_{0}\left(  t\right)  =0$, then the game has terminated (i.e.,
there is a single alive player)\ and we must actually have $s_{1}\left(
t\right)  ...s_{N}\left(  t\right)  = 0...010...0$; i.e., there exists a
single $n$ such that $s_{n}\left(  t\right)  =1$ ($P_{n}$ is the single alive
player)\ and, for $m\neq n$, $s_{m}\left(  t\right)  =0$ ($P_{m}$ is dead).
\end{enumerate}

\noindent We exclude from consideration \emph{inadmissible} states, i.e.,
states which will never be visited during a play of the game\footnote{For
example, with $N=3$ players, the state $s_{0}s_{1}s_{2}s_{3}=1011$ will never
occur because it corresponds to the dead player $P_{1}$ having the move.}. It
will be useful to define the following sets of \emph{admissible} states:%
\begin{align*}
\forall k  &  \in \left \{  2,...,N\right \}  :S_{k}=\left \{  \mathbf{s}
=s_{0}s_{1}...s_{N} :s_{0}\neq0\text{ and }s_{s_{0}}=1\text{ and }\sum
_{n=1}^{N}s_{n}=k\right \}  ,\\
\forall k  &  \in \left \{  1,...,N\right \}  :\widetilde{S}_{k}=\left \{
\mathbf{s}=0s_{1}...s_{N} :\text{and }s_{k}=1\text{ and }s_{n}=0\text{ for
}n\neq k\right \}  .
\end{align*}
$S_{k}$ is the set of states in which $k$ players are alive and $P_{s_{0}}$ is
the (alive)\ player who has the move; $\widetilde{S}_{k}$ is the set of states
in which the sole surviving player is $P_{k}$ and no player has the move.
Letting $S_{1}=%
{\displaystyle \bigcup \limits_{k=1}^{N}}
\widetilde{S}_{k}$, the set of all admissible states is $S=%
{\displaystyle \bigcup \limits_{k=1}^{N}}
S_{k}.$

\emph{Game actions} have the form $a\mathbf{=}m\in \left \{\lambda,1,...,N\right \}
$. Suppose that at time $t\in \left \{  1,2,...\right \}  $ the game action is
$a\left(  t\right)  =m$; if $m>0$ then the player who has the move (as
determined by the corresponding game state $\mathbf{s}$) will fire at $P_{m}$.
Depending on the current state, an admissible action must satisfy the
following:\ a player cannot fire either at himself or at an already dead
player. We write $a=\lambda$ to denote the \textquotedblleft \emph{null
action}\textquotedblright, i.e., no shooting takes place.

We assume that the game has \emph{perfect information}, i.e., at every $t$ all
players know all previous states and actions.

\subsection{State Transitions}

The game evolution is described in terms of state transitions.

\begin{enumerate}
\item The game starts at some initial state $\mathbf{s}\left(  0\right)
=s_{0}\left(  0\right)  s_{1}\left(  0\right)  s_{2}\left(  0\right)
...s_{N}\left(  0\right)  $.

\item Assume at $t\in \left \{  1,2,...\right \}  $ we have $\mathbf{s}\left(
t-1\right)  =s_{0}\left(  t-1\right)  s_{1}\left(  t-1\right)  ...,s_{N}%
\left(  t-1\right)  $ with $n=s_{0}\left(  t-1\right)  \neq0$ and $P_{n}$
performs action $a\left(  t\right)  =m\in \left \{  1,...,N\right \}  $ where
$s_{m}\left(  t-1\right)  =1$. Then the next state is $\mathbf{s}\left(
t\right)  =s_{0}\left(  t\right)  s_{1}\left(  t\right)  ...s_{N}\left(
t\right)  $ and is obtained by the following rules.

\begin{enumerate}
\item For the $s_{1}\left(  t\right)  ...s_{N}\left(  t\right)  $ part \ of
the state we have:%
\begin{align*}
&  \Pr \left(  s_{m}\left(  t\right)  =0\right)  =p_{n}\text{ (i.e., }%
P_{n}\text{ hit and killed }P_{m}\text{),}\\
&  \Pr \left(  s_{m}\left(  t\right)  =1\right)  =1-p_{n}\text{ (i.e., }%
P_{n}\text{ missed }P_{m}\text{),}\\
&  \forall i\in \left \{  1,...,m-1,m+1,...,N\right \}  :\Pr \left(  s_{i}\left(
t\right)  =s_{i}\left(  t-1\right)  \right)  =1,
\end{align*}
\emph{where we assume} \emph{that the marksmanships }$p_{n}\in \left(
0,1\right)  $ \emph{for all} $n\in \left \{  1,...,N\right \}  $.

\item The $s_{0}\left(  t\right)  $ part of the state specifies the next
player who has the move. This will be the \textquotedblleft next\ alive
player\textquotedblright \ and is best understood by some examples for the
three player case (analogous things hold for $N>3$ players).

\begin{enumerate}
\item Suppose $\mathbf{s}\left(  t-1\right)  =1111$, $a\left(  t\right)  =2$
and $s_{1}\left(  t\right)  s_{2}\left(  t\right)  s_{3}\left(  t\right)
=111$. This means that $P_{1}$ had the move, fired at $P_{2}$ and missed him;
hence the next player to have the move is $P_{2}$, i.e., $s_{0}\left(
t\right)  =2$.

\item Suppose $\mathbf{s}\left(  t-1\right)  =1111$, $a\left(  t\right)  =2$
and $s_{1}\left(  t\right)  s_{2}\left(  t\right)  s_{3}\left(  t\right)
=101$. This means that $P_{1}$ had the move, fired at $P_{2}$ and killed him;
hence the next player to have the move is $P_{3}$, i.e., $s_{0}\left(
t\right)  =3$.

\item Suppose $\mathbf{s}\left(  t-1\right)  =3111$, $a\left(  t\right)  =1$
and $s_{1}\left(  t\right)  s_{2}\left(  t\right)  s_{3}\left(  t\right)
=111$. This means that $P_{3}$ had the move, fired at $P_{1}$ and missed him;
hence the next player to have the move is $P_{1}$, i.e., $s_{0}\left(
t\right)  =1$.

\item Supposse $\mathbf{s}(t-1) = 1101$, $a(t) = 3$ and $s_1(t)s_2(t)s_3(t) = 100$. 
This means that $P_1$ had the move, fired at $P_3$ and killed him; hence there is no 
next player to have the move i.e., the game has terminated and $s_0(t) = 0$.
\end{enumerate}
\end{enumerate}

\item Finally, if $\mathbf{s}\left(  t-1\right)  =0...010...0$ then the game
has terminated and there is not either a next action $a\left(  t\right)  $ or
a next state\ $\mathbf{s}\left(  t\right)  $.
\end{enumerate}

Using the above rules, for every $N\in \left \{  2,3,..\right \}  $ we can
construct a \emph{controlled Markov chain} with state space $S$ and transition
probability matrix $\Pi \left(  a\right)  $, with%
\begin{align*}
\forall \mathbf{s}^{\prime}  &  \in S\backslash S_{1}:\Pi_{\mathbf{s}^{\prime
},\mathbf{s}^{\prime \prime}}\left(  a\right)  =\Pr \left(  \mathbf{s}\left(
t\right)  =\mathbf{s}^{\prime \prime}|\mathbf{s}\left(  t-1\right)
=\mathbf{s}^{\prime},a\left(  t\right)  =a\right)  ,\\
\forall \mathbf{s}^{\prime}  &  \in S_{1}:\Pi_{\mathbf{s}^{\prime}%
,\mathbf{s}^{\prime}}\left(  \lambda \right)  =1.
\end{align*}
The state sequence obtained from the above controlled Markov chain corresponds
exactly to the state sequence of a $N$-uel, except for the fact that, in the
Markov chain, every terminal state $\mathbf{s}^{\prime}\in S_{1}$ loops back
to itself with probability one, producing an infinite state sequence, while in
the $N$-uel as soon as a terminal state is entered the game is over, resulting
in a finite state sequence. Note that when $\mathbf{s}^{\prime}\in S_{1}$ the
only admissible action is $a=\lambda$ (no-shoot action). \noindent

Because in each turn a player must shoot (unless he is the sole survivor)\ and
$p_{n}>0$ for all $n\in \left \{  1,...,N\right \}  $, it is easy to prove the following.

\begin{proposition}
\label{prop0201}\normalfont For all $N$, the probability that the $N$-uel
terminates in finite time equals one.
\end{proposition}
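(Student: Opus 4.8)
The plan is to exploit the single structural feature that makes termination inevitable: because abstention is forbidden and every marksmanship $p_n$ is strictly positive, \emph{each} turn of an in-progress game carries a probability bounded away from zero of eliminating a player, and this lower bound is uniform over all states, all players who may have the move, and all (possibly randomized, history-dependent) strategy profiles. Concretely, first I would set $p_{\min}=\min_{1\le n\le N}p_{n}$ and record that $p_{\min}>0$ by the standing hypothesis $p_{n}\in(0,1)$.

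Next I would establish the key one-block estimate: starting from any admissible in-progress state $\mathbf{s}\in S\setminus S_{1}$, and \emph{regardless of the strategies used}, the probability that the game has terminated within the next $N-1$ turns is at least $q:=p_{\min}^{\,N-1}>0$. The reason is that whoever has the move must fire at some \emph{living} opponent and kills that opponent with probability $p_{s_{0}}\ge p_{\min}$; conditioning on this being a hit, the next player with the move is again forced to fire at a living opponent, and so on. If the block begins with $m$ living players ($2\le m\le N$), then $m-1\le N-1$ consecutive hits reduce the count to exactly one and the game is over; hence the event ``the next shots are all hits until the game terminates'' forces termination within $N-1$ turns, and by the chain rule its conditional probability is at least $p_{\min}^{\,m-1}\ge p_{\min}^{\,N-1}=q$, uniformly in the past history and in the strategies.

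Then I would iterate this estimate over consecutive blocks of $N-1$ turns. Writing $T$ for the (random) termination time and using the Markov property of the controlled chain $\Pi(a)$ together with the tower rule, the one-block estimate yields $\Pr\!\left(T>m(N-1)\right)\le(1-q)\,\Pr\!\left(T>(m-1)(N-1)\right)$ for every $m\ge1$, and therefore $\Pr\!\left(T>m(N-1)\right)\le(1-q)^{m}$. Letting $m\to\infty$ gives $\Pr(T=\infty)=0$, i.e. $\Pr(T<\infty)=1$; the same estimate in fact shows $\mathbb{E}[T]\le(N-1)/q<\infty$, so the $N$-uel even terminates in finite \emph{expected} time.

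I do not anticipate a serious obstacle: the argument is essentially a geometric-trials bound. The only point that needs care is the ``regardless of the strategies'' clause — one must phrase the one-block estimate so that the constant $q$ does not depend on which targets the players choose, which is precisely why it is essential both that the lower bound be taken over \emph{all} players' marksmanships and that no null action be available while two or more players remain alive.
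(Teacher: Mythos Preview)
Your argument is correct and is exactly the elaboration the paper intends: the paper does not actually supply a proof of this proposition but simply remarks that it is easy because each player must shoot and every $p_{n}>0$, which is precisely the structural feature your geometric-trials bound exploits. One minor wording point: with general history-dependent strategies the state process need not be Markov, but since you already phrase the one-block estimate as uniform in the past history and invoke the tower rule, the argument goes through unchanged.
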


The above complete the description of the $N$-uel, which we will also denote
by $\Gamma_{N}\left(  \mathbf{p}\right)  $, where $\mathbf{p}=\left(
p_{1},...,p_{N}\right)  $. 

\subsection{Histories, Payoffs and Strategies}

We now proceed to define and discuss
\emph{histories}, \emph{payoffs} and \emph{strategies} used in $\Gamma_{N}\left(  \mathbf{p}\right)  $.
In view of Proposition \ref{prop0201}, we only need to consider finite length
histories. The game history at time $t$ is $\mathbf{h}\left(  t\right)
\mathbf{=}\mathbf{s}\left(  0\right)  a\left(  1\right)  \mathbf{s}\left(
1\right)  ...\mathbf{s}\left(  t\right)  $ and its \emph{length} is $t$. The
set of all \emph{admissible terminal histories} is%
\[
H_{0}=\left \{  \mathbf{h}\left(  t\right)  =\mathbf{s}\left(  0\right)
a\left(  1\right)  \mathbf{s}\left(  1\right)  ...\mathbf{s}\left(  t\right)
:\text{ }\mathbf{h}\left(  t\right)  \text{ can occur in the game
and }\mathbf{s}\left(  t\right)  \in S_{1}\right \}  .
\]
The set of all \emph{admissible nonterminal histories} is
\[
H_{1}=\left \{  \mathbf{h}\left(  t\right)  =\mathbf{s}\left(  0\right)
a\left(  1\right)  \mathbf{s}\left(  1\right)  ...\mathbf{s}\left(  t\right)
:\mathbf{h}\left(  t\right)  \text{ can occur in the game
}\mathbf{s}\left(  t\right)  \not \in S_{1}\right \}  .
\]
The set of all \emph{admissible histories} is $H=H_{0}\cup H_{1}$.

For $n\in \left \{  1,...,N\right \}  $, $P_{n}$'s \emph{payoff} function is
$\mathsf{Q}_{n}:H\rightarrow \mathbb{R}$ and is defined as follows:
\[
\forall n\in \left \{  1,...,N\right \}  ,\forall \mathbf{h}\in H:\mathsf{Q}%
_{n}\left(  \mathbf{h}\right)  =\left \{
\begin{array}
[c]{ll}%
1 & \text{iff }\mathbf{h=}\mathbf{s}\left(  0\right)  a\left(  1\right)
  ...\mathbf{s}\left(  t\right)  \in H_{0}\text{ and
}s_{n}\left(  t\right)  =1\text{,}\\
0 & \text{otherwise.}%
\end{array}
\right.
\]
I.e., upon game termination, the single surviving player receives one payoff unit.

Let $\Delta_{N}$ denote the set of $N$-long probability vectors $\mathbf{x}%
=\left(  x_{1},...,x_{N}\right)  $. A \emph{strategy} is a function
$\sigma:H_{1}\rightarrow \Delta_{N}\cup \left \{  \left(  0,...,0\right)
\right \}  $. Suppose $\mathbf{h} = \mathbf{s}(0)a(1)...a(t-1)ns1...sN$ and $P_{n}$ uses strategy 
$\mathbf{x}_{n}=\left(  x_{n1},...,x_{nN}\right)  =\sigma_{n}\left(  \mathbf{h}\right)$, then
\[
\forall m\in \left \{  1,...,N\right \}  :x_{nm}=\Pr \left(
\text{\textquotedblleft}P_{n}\text{ shoots }P_{m}\text{\textquotedblright%
}\right)  .
\]
Note that a strategy needs to be defined only for nonterminal histories, since
at the end of a terminal history no shooting takes place. We will only
consider \emph{admissible strategies}, i.e., those which assign positive
probability only to admissible actions\footnote{For example a strategy which
assigns positive probability to shooting a dead player is inadmissible.}. In
addition, when $P_{n}$ uses an admissible $\sigma_{n}$ and $m\neq n$, we must
have
\[
\forall \mathbf{h}=\mathbf{s}\left(  0\right)  a\left(  1\right)  ...a\left(
t-1\right)  ms_{1}...s_{N}:\sigma_{n}\left(  \mathbf{h}\right)  =\left(
0,...,0\right)
\]
(since $P_{n}$ does not have the move he cannot fire at anybody).

The probabilities $\mathbf{x=}\left(  \mathbf{x}_{1},...,\mathbf{x}%
_{N}\right)  =\sigma_{n}\left(  \mathbf{h}\right)  $ depend, in general, on
the entire game history (up to the current time). A \emph{stationary strategy}
is one that only depends on the current state and then we write $\sigma
_{n}\left(  \mathbf{s}\right)  =\mathbf{x}_{n}\mathbf{=}\left(  x_{n1}%
,...,x_{nN}\right)  $. A \emph{pure strategy} is a function that maps
histories to probability vectors which concentrate all probability to a single
component, equal to one; with a slight abuse of notation we then also write
$\sigma_{n}\left(  \mathbf{h}\right)  =m$, meaning that $P_{n}$ shoots $P_{m}$
with probability one ($\sigma_{n}\left(  \mathbf{h}\right)  =\left(
0,...,0\right)  $ can be abbreviated as $\sigma_{n}\left(  \mathbf{h}\right)
=\lambda$, meaning that $P_{n}$ does not shoot). A \emph{pure stationary
strategy} is a pure strategy which only depends on the current state.

A \emph{strategy profile} is a vector $\sigma=\left(  \sigma_{1}%
,...,\sigma_{N}\right)  $, where $\sigma_{n}$ is the strategy used by $P_{n}$
(for $n\in \left \{  1,...,N\right \}  $). An initial state $\mathbf{s}\left(
0\right)  $ and a strategy profile $\sigma$, define a probability measure on
all histories, hence the \emph{expected payoff} to $P_{n}$ is well defined by
\[
Q_{n}\left(  \mathbf{s}\left(  0\right)  ,\sigma \right)  =\mathbb{E}%
_{\mathbf{s}\left(  0\right)  ,\sigma}\left(  \mathsf{Q}_{n}\left(
\mathbf{h}\right)  \right)  .
\]
Note that $Q_{n}\left(  \mathbf{s}\left(  0\right)  ,\sigma \right)  $ equals
the probability that $P_{n}$ is the sole remaining survivor or, equivalently,
the winner of $\Gamma_{N}\left(  \mathbf{p}\right)  $.

\section{$N$-uel Payoff System\label{sec03}}

For each $N\in \left \{  2,3,...\right \}  $ and each fixed strategy profile
$\sigma$, the payoffs $\left(  Q_{i}\left(  \mathbf{s},\sigma \right)  \right)
_{i\in\{1,...,N\},\mathbf{s}\in S}$ satisfy a system of \emph{payoff equations}. In this
section we will formulate the \emph{payoff system}, study its form and
properties, and will introduce appropriate algorithms for its solution. For
clarity of presentation, we will first study the $2$-uel and $3$-uel as
special cases and then deal with the general case of the $N$-uel.

\subsection{The $2$-uel}

With $N=2$ players we have
\begin{align*}
S_{1}  &  =\left \{  010,001\right \}  \text{ and }S_{2}=\left \{
111,211\right \}  ,\\
S  &  =S_{1}\cup S_{2}=\left \{  010,001,111,211\right \}  .
\end{align*}
For each player, the only admissible strategy is to keep shooting at his
opponent, until one player is eliminated. The only nonzero state transition
probabilities are%
\begin{align*}
\Pr \left(  \mathbf{s}\left(  t\right)  =010|\mathbf{s}\left(  t-1\right)
=111,a\left(  t\right)  =2\right)   &  =p_{1}\\
\Pr \left(  \mathbf{s}\left(  t\right)  =211|\mathbf{s}\left(  t-1\right)
=111,a\left(  t\right)  =2\right)   &  =1-p_{1}\\
\Pr \left(  \mathbf{s}\left(  t\right)  =001|\mathbf{s}\left(  t-1\right)
=211,a\left(  t\right)  =1\right)   &  =p_{2}\\
\Pr \left(  \mathbf{s}\left(  t\right)  =111|\mathbf{s}\left(  t-1\right)
=211,a\left(  t\right)  =1\right)   &  =1-p_{2}%
\end{align*}
We can represent the above information compactly by the following state
transition graph, where actions and corresponding transition probabilities are
written next to the respective edges.

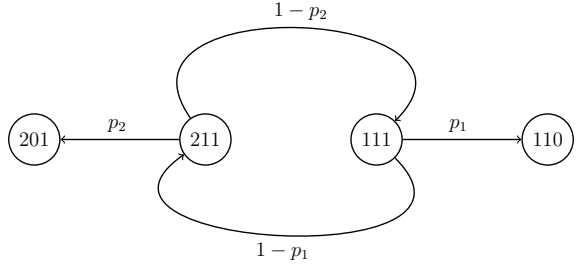
\begin{figure}[ptbh]
\centering \scalebox{0.65} {
\begin{tikzpicture}[node distance={35mm}, thick, main/.style = {draw, circle}]
\node[main] (1) {$111$};
\node[main] (2) [left of=1]{$211$};
\node[main] (3) [right of=1] {$110$};
\node[main] (4) [left of=2]{$201$};
\draw[->](1) to [out=315,in=215,looseness=2,below] node{$1-p_1$}(2);
\draw[->](2) to [out=125,in=45,looseness=2,above] node{$1-p_2$}(1);
\draw[->] (1) -- node[midway, above right, sloped, pos=0.33] {$p_1$} (3);
\draw[->] (2) -- node[midway, above right, sloped, pos=0.66] {$p_2$} (4);
\end{tikzpicture}
}\caption{The $2$-uel state transition graph.}%
\end{figure}

\noindent The partition of states into the sets $S_{1}$ and $S_{2}$ is
reflected in the structure of the state transition graph:\ each $\mathbf{s}\in
S_{2}$ has two successors:\ a state $\mathbf{s}^{\prime}\in S_{2}$ and a state
$\mathbf{s}^{\prime \prime}\in S_{1}$; each $\mathbf{s}\in S_{1}$ has no
outgoing transitions.

Since the players have no choice of strategies, we do not have have here a
strategic game, but a \emph{game of chance}. In fact, the $2$-uel is
essentially a Markov chain\footnote{Except for the fact that the terminal
states have no associated state transitions.}. For every initial state, a
player's payoff equals his winning probability, which we will now compute. For
notational brevity, we set
\[
\forall i,n\in \left \{  1,2\right \},
\text{ for all admissible } ns_{1},s_{2}:
V_{i,ns_{1}s_{2}}=Q_{i}\left(  ns_{1}s_{2},\left(  \sigma
_{1}^{\prime},\sigma_{2}^{\prime}\right)  \right)  ,
\]
i.e., the payoff to $P_{i}$ when the game starts in state $\mathbf{s}\left(
0\right)  \mathbf{=}ns_{1}s_{2}$, and both players use the same strategy
$\sigma_{1}^{\prime}=\sigma_{2}^{\prime}$ of always shooting at the opponent.
It is easily seen that the $V_{1,ns_{1}s_{2}}$'s satisfy the following system
of equations:
\begin{align}
V_{1,010}  &  =1\nonumber \\
V_{1,001}  &  =0\nonumber \\
V_{1,111}  &  =\left(  1-p_{1}\right)  V_{1,211}+p_{1}V_{1,010}\label{eq03005}%
\\
V_{1,211}  &  =\left(  1-p_{2}\right)  V_{1,111}+p_{2}V_{1,200}\nonumber
\end{align}
This system has the unique solution
\begin{equation}
V_{1,010}=1,\quad V_{1,001}=0,\quad V_{1,111}=\frac{p_{1}}{p_{1}+p_{2}%
-p_{1}p_{2}},\quad V_{1,211}=\frac{p_{1}\left(  1-p_{2}\right)  }{p_{1}%
+p_{2}-p_{1}p_{2}}\allowbreak. \label{eq03001}%
\end{equation}
A similar system can be set up for the $V_{2,ns_{1}s_{2}}$ variables and has
the unique solution
\begin{equation}
V_{2,010}=0,\quad V_{2,001}=1,\quad V_{2,111}=\frac{p_{2}\left(
1-p_{1}\right)  }{p_{1}+p_{2}-p_{1}p_{2}},\quad V_{2,211}=\frac{p_{2}}%
{p_{1}+p_{2}-p_{1}p_{2}}\allowbreak. \label{eq03002}%
\end{equation}
The formulas (\ref{eq03001})-(\ref{eq03002})\ provide the solution to the
$\Gamma_{2}\left(  \mathbf{p}\right)  $, i.e., the winning probability for
each player and for each starting state.

\subsection{The $3$-uel}

We will limit our analysis to the case where all players use stationary
strategies. Suppose $\sigma_{n}\left(  \mathbf{s}\right)  $ is a stationary
strategy used by $P_{n}$. This can be characterized as follows.

\begin{enumerate}
\item For all $\mathbf{s}\in S_{1}$ (one player alive)\ there is no need to
define $\sigma_{n}\left(  \mathbf{s}\right)  $.

\item For all $\mathbf{s}\in S_{2}$ (two players alive)\ the only admissible
strategy $\sigma_{n}\left(  \mathbf{s}\right)  $ is to shoot at the sole alive opponent.

\item Finally, consider $\sigma_{n}\left(  \mathbf{s}\right)  =\mathbf{x}%
_{n}=\left(  x_{n1},x_{n2},x_{n3}\right)  $ when $\mathbf{s=}s_{0}s_{1}%
s_{2}s_{3}\in S_{3}$.

\begin{enumerate}
\item When $s_{0}\neq n$, we will necessarily have $\mathbf{x}_{n}=\left(
0,0,0\right)  $.

\item When $s_{0}=n$, we will necessarily have $x_{nn}=0$.
\end{enumerate}
\end{enumerate}

\noindent Hence, for every $n\in \left \{  1,2,3\right \}  $, an admissible
stationary strategy $\sigma_{n}$ for $P_{n}$ is fully determined by the two
positive numbers%
\[
\forall m\neq n:x_{nm}=\Pr \left(  \text{\textquotedblleft}P_{n}\text{ fires at
}P_{m}\text{\textquotedblright}|\text{\textquotedblleft the game state is
}n111\text{\textquotedblright}\right)
\]
which must satisfy $\sum_{m\neq n}x_{nm}=1$. Using the $x_{mn}$'s we can draw
the state transition graph shown in Figure 2.

\begin{minipage}[c]{0.95\linewidth}
\centering
\scalebox{0.65} {
\begin{tikzpicture}[node distance={35mm}, thick, main/.style = {draw, circle}]
\node[main] (1) {$1111$};
\node[main] (2) [below left of=1]{$2111$};
\node[main] (3) [below right of=1] {$3111$};
\node[main] (4) [above left of=1]{$2110$};
\node[main] (5) [above right of=1] {$3101$};
\node[main] (6) [above left of=2]{$1110$};
\node[main] (7) [below of=2] {$3011$};
\node[main] (8) [above right of=3]{$1101$};
\node[main] (9) [below of=3] {$2011$};
\node[main](10)[above of=1]{$1100$};
\node[main](11)[below left of=2]{$2010$};
\node[main](12)[below right of=3]{$3001$};
\draw[->] (1) -- node[midway, above right, sloped, pos=0.66] {\small{$1-p_1$}} (2);
\draw[->] (2) -- node[midway, above right, sloped, pos=0.33] {\small{$1-p_2$}} (3);
\draw[->] (3) -- node[midway, above right, sloped, pos=0.66] {\small{$1-p_3$}} (1);
\draw[->] (1) -- node[midway, above right, sloped, pos=0.66] {\small{$x_{13}p_1$}} (4);
\draw[->] (2) -- node[midway, above right, sloped, pos=0.66] {\small{$x_{23}p_2$}} (6);
\draw[->] (3) -- node[midway, above right, sloped, pos=0.33] {\small{$x_{32}p_3$}} (8);
\draw[->] (1) -- node[midway, above right, sloped, pos=0.33] {\small{$x_{12}p_1$}} (5);
\draw[->] (2) -- node[midway, above right, sloped, pos=0.33] {\small{$x_{21}p_2$}} (7);
\draw[->] (3) -- node[midway, above right, sloped, pos=0.33] {\small{$x_{31}p_3$}} (9);
\draw [->] (4) to [out=180,in=90,looseness=1] node{\small{$1-p_2$}} (6);
\draw [->] (7) to [out=315,in=225,looseness=1] node{\small{$1-p_1$}}  (9);
\draw [->] (8) to [out=90,in=0,looseness=1]  node{\small{$1-p_1$}} (5);
\draw [->] (6) to [out=0,in=270,looseness=1] node{\small{$1-p_3$}}  (4);
\draw [->] (9) to [out=135,in=45,looseness=1]  node{\small{$1-p_2$}} (7);
\draw [->] (5) to [out=270,in=180,looseness=1]  node{\small{$1-p_3$}} (8);
\draw[->](4) to [out=135,in=135,looseness=2] node{\small{$p_2$}}(11);
\draw[->](9) to [out=270,in=225,looseness=2] node{\small{$p_2$}}(11);
\draw[->](8) to [out=0,in=45,looseness=2] node{\small{$p_1$}}(10);
\draw[->](6) to [out=180,in=135,looseness=2] node{\small{$p_1$}}(10);
\draw[->](5) to [out=45,in=45,looseness=2] node{\small{$p_3$}}(12);
\draw[->](7) to [out=270,in=315,looseness=2] node{\small{$p_3$}}(12);
\end{tikzpicture}
}

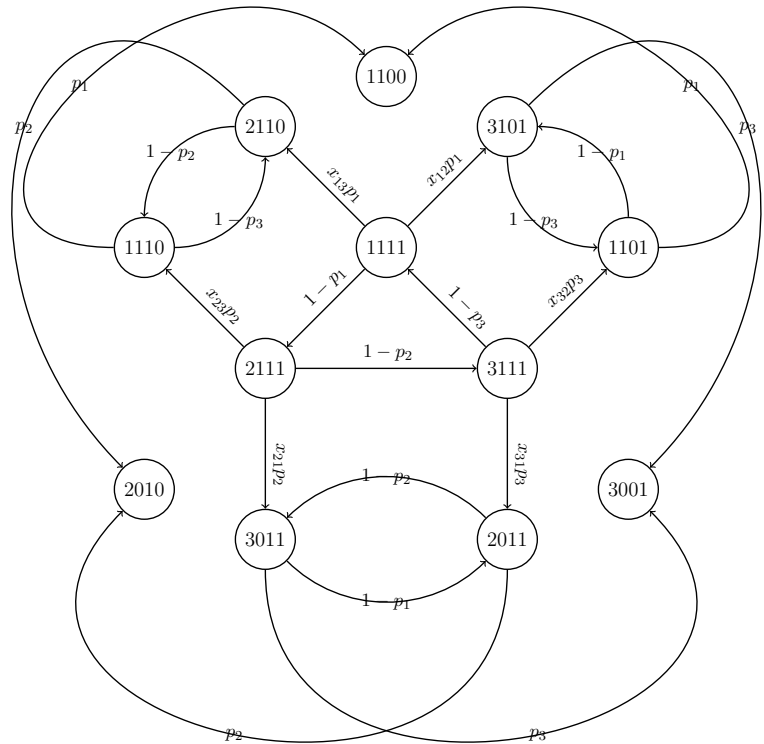
\captionof{figure}{The $3$-uel state transition graph.}%
\end{minipage}

\bigskip

\noindent We will now show the following.

\begin{proposition}
\normalfont For any $p_{1}$, $p_{2}$, $p_{3}$ such that $m\neq n\Rightarrow
p_{m}\neq p_{n}$, and every admissible strategy profile $\sigma$, the $3$-uel
payoff system has a unique solution.
\end{proposition}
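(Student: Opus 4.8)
The plan is to exploit the layering of the state space. A shot either misses, leaving the set of living players unchanged, or kills, shrinking that set by one, and a dead player never returns; hence no transition goes from a layer $S_{k}$ to a layer $S_{k'}$ with $k'>k$. Ordering the states as $S_{3}$, then $S_{2}$, then $S_{1}$, the coefficient matrix of the payoff system is therefore block upper triangular. I would also note that this matrix does \emph{not} depend on the admissible strategy profile $\sigma$: for $\mathbf{s}\in S_{1}$ the equation just sets $V_{i,\mathbf{s}}$ to $1$ if $P_{i}$ is the survivor recorded in $\mathbf{s}$ and to $0$ otherwise; for $\mathbf{s}\in S_{2}$ the sole mover is forced to shoot the sole opponent; and for $\mathbf{s}=n111\in S_{3}$ a miss (probability $1-p_{n}$, regardless of which opponent $P_{n}$ targets) always leads to the single $S_{3}$-state in which the next player in cyclic order has the move, so the coefficient $1-p_{n}$ of the only $S_{3}$-variable on the right-hand side is independent of $\sigma$, and the mixed action $\mathbf{x}_{n}$ only changes which $S_{2}$-constants appear. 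It thus suffices to prove that every diagonal block of this coefficient matrix is invertible.

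The $S_{1}$ block is the identity. Each $S_{2}$ diagonal block corresponds to a fixed surviving pair $\{P_{m},P_{n}\}$ and is exactly the coefficient matrix of the $2$-uel subsystem solved in the previous subsection, with determinant $1-(1-p_{m})(1-p_{n})=p_{m}+p_{n}-p_{m}p_{n}>0$ since $p_{m},p_{n}\in(0,1)$. The $S_{3}$ diagonal block is $I-A$, where $A$ is the $3\times 3$ cyclic matrix sending $1111\mapsto 2111\mapsto 3111\mapsto 1111$ with weights $1-p_{1},1-p_{2},1-p_{3}$; its characteristic polynomial is $\mu^{3}-(1-p_{1})(1-p_{2})(1-p_{3})$, so $\rho(A)=\left[(1-p_{1})(1-p_{2})(1-p_{3})\right]^{1/3}<1$ and $\det(I-A)=1-(1-p_{1})(1-p_{2})(1-p_{3})>0$. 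Multiplying the diagonal blocks, the determinant of the whole coefficient matrix is $\bigl(1-(1-p_{1})(1-p_{2})(1-p_{3})\bigr)\prod_{m<n}(p_{m}+p_{n}-p_{m}p_{n})\neq 0$, so the payoff system has a unique solution, obtained by back-substitution from the $S_{1}$ layer up to the $S_{3}$ layer. (Equivalently, by Proposition~\ref{prop0201} every state outside $S_{1}$ is transient, so the matrix $B$ obtained by deleting the $S_{1}$ rows and columns satisfies $\rho(B)<1$, whence $I-B$ is invertible with $(I-B)^{-1}=\sum_{k\ge 0}B^{k}$.)

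I do not anticipate a real obstacle; the only care needed is the bookkeeping confirming that the state transition graph has no edge from a smaller to a larger layer and that, within $S_{3}$, a miss always passes the move to the same next player regardless of the chosen target, so that the $S_{3}$ diagonal block is indeed the simple cyclic matrix above. Finally, the distinctness hypothesis $p_{m}\neq p_{n}$ is not actually used: each determinant above is strictly positive for arbitrary $p_{1},p_{2},p_{3}\in(0,1)$. It is stated here only because it will be required in the equilibrium analysis of Section~\ref{sec04}.
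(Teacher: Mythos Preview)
Your proposal is correct and takes essentially the same approach as the paper: both exploit the layered structure $S_{1},S_{2},S_{3}$ to solve the system stepwise, reducing the question to the invertibility of the $3\times 3$ cyclic block on $S_{3}$, whose determinant $1-(1-p_{1})(1-p_{2})(1-p_{3})>0$ is the key computation in both arguments. Your phrasing in terms of block upper triangularity and spectral radius is slightly more abstract than the paper's explicit stepwise solution (the paper writes out the equations, solves the $S_{1}$ and $S_{2}$ layers by hand, and then inverts the $3\times 3$ system for $S_{3}$), and your observation that the distinctness hypothesis $p_{m}\neq p_{n}$ is not actually used in this proposition is correct.
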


\begin{proof}
Our goal is to compute each player's expected payoff (equivalently, his
winning probability). To this end, similarly to the two-player game, we define%
\[
\forall i,n\in \left \{  1,2,3\right \},
\text{for all admissible } ns_{1}s_{2}s_{3}:
V_{i,ns_{1}s_{2}s_{3}}=Q_{i}\left(  ns_{1}s_{2}s_{3},\sigma \right),
\]
i.e., the payoff to $P_{i}$ when the game starts at state $ns_{1}s_{2}s_{3}$
and the strategy profile $\sigma=\left(  \sigma_{1},\sigma_{2},\sigma
_{3}\right)  $ is used\footnote{The dependence on $\sigma$ is omitted from the
notation, for the sake of brevity.}.

Let us momentarily focus on $P_{1}$'s payoffs. It is easily seen that the
variables $\left(  V_{1,ns_{1}s_{2}s_{3}}\right)  _{ns_{1}s_{2}s_{3}\in S}$
must satisfy the following equations:%
\begin{align}
V_{1,0100}  &  =1\nonumber \\
V_{1,0010}  &  =0\nonumber \\
V_{1,0001}  &  =0\nonumber \\
V_{1,1101}  &  =(1-p_{1})V_{1,3101}+p_{1}V_{1,0100}\nonumber \\
V_{1,3101}  &  =(1-p_{3})V_{1,1101}+p_{3}V_{1,0001}\nonumber \\
V_{1,1110}  &  =(1-p_{1})V_{1,2110}+p_{1}V_{1,0100}\nonumber \\
V_{1,2110}  &  =(1-p_{2})V_{1,1110}+p_{2}V_{1,0010}\label{eq03004}\\
V_{1,2011}  &  =(1-p_{2})V_{1,3011}+p_{2}V_{1,0010}\nonumber \\
V_{1,3011}  &  =(1-p_{3})V_{1,2011}+p_{3}V_{1,0001}\nonumber \\
V_{1,1111}  &  =(1-p_{1})V_{1,2111}+x_{12}p_{1}V_{1,3101}+x_{13}%
p_{1}V_{1,2110}\nonumber \\
V_{1,2111}  &  =(1-p_{2})V_{1,3111}+x_{23}p_{2}V_{1,1110}+x_{21}%
p_{2}V_{1,3011}\nonumber \\
V_{1,3111}  &  =(1-p_{3})V_{1,1111}+x_{31}p_{3}V_{1,2011}+x_{32}%
p_{3}V_{1,1101}\nonumber
\end{align}
The above system can be solved in a stepwise fashion. The first three
equations immediately yield the values of $V_{1,0100}$, $V_{1,0100}$,
$V_{1,0100}$. The fourth and fifth equations can be solved to obtain the
values of $V_{1,1101}$ and $V_{1,3101}$:%
\begin{equation}
V_{1,1101}=\frac{p_{1}}{p_{1}+p_{3}-p_{1}p_{3}},\quad V_{1,3101}=\frac
{p_{1}\left(  1-p_{3}\right)  }{p_{1}+p_{3}-p_{1}p_{3}}; \label{eq03003}%
\end{equation}
naturally, these are exactly the payoffs for a duel between $P_{1}$ and
$P_{3}$. Similarly, the sixth and seventh equations yield $V_{1,1110}$ and
$V_{1,2110}$ (expressions for $V_{1,s_{0}s_{1}s_{2}s_{3}}$ similar to those of
(\ref{eq03003})) and the eight and ninth equations yield $V_{1,2011}%
=V_{1,3011}=0$.

The final three equations involve the unknowns $V_{1,1111}$, $V_{1,2111}$,
$V_{1,3111}$ and the previously computed $V_{1,ns_{1}s_{2}s_{3}}$'s. The
system can be rewritten as
\begin{align*}
V_{1,1111}-(1-p_{1})V_{1,2111}  &  =x_{12}p_{1}V_{1,3101}+x_{13}%
p_{1}V_{1,2110},\\
V_{1,2111}-(1-p_{2})V_{1,3111}  &  =x_{23}p_{2}V_{1,1110}+x_{21}%
p_{2}V_{1,3011},\\
V_{1,3111}-(1-p_{3})V_{1,1111}  &  =x_{31}p_{3}V_{1,2011}+x_{32}%
p_{3}V_{1,1101}.
\end{align*}
Letting
\begin{align}
A_{1}  &  =x_{12}p_{1}V_{1,3101}+x_{13}p_{1}V_{1,2110},\nonumber \\
A_{2}  &  =x_{23}p_{2}V_{1,1110}+x_{21}p_{2}V_{1,3011},\label{eq03041}\\
A_{3}  &  =x_{31}p_{3}V_{1,2011}+x_{32}p_{3}V_{1,1101},\nonumber
\end{align}
the system becomes
\[
\left[
\begin{array}
[c]{ccc}%
1 & -\left(  1-p_{1}\right)  & 0\\
0 & 1 & -\left(  1-p_{2}\right) \\
-\left(  1-p_{3}\right)  & 0 & 1
\end{array}
\right]  \left[
\begin{array}
[c]{c}%
V_{1,1111}\\
V_{2,1111}\\
V_{3,1111}%
\end{array}
\right]  =\left[
\begin{array}
[c]{c}%
A_{1}\\
A_{2}\\
A_{3}%
\end{array}
\right]
\]
The determinant of the system is $D=1-\left(  1-p_{1}\right)  \left(
1-p_{2}\right)  \left(  1-p_{3}\right)  >0$. Hence the system has a unique
solution which is
\begin{align}
V_{1,1111}  &  =\frac{A_{1}+A_{2}+A_{3}-p_{1}A_{2}-p_{2}A_{3}-p_{1}A_{3}%
+p_{1}p_{2}A_{3}}{1-\left(  1-p_{1}\right)  \left(  1-p_{2}\right)  \left(
1-p_{3}\right)  }\label{eq03042}\\
V_{1,2111}  &  =\frac{A_{2}+A_{3}+A_{1}-p_{2}A_{3}-p_{3}A_{1}-p_{2}A_{1}%
+p_{2}p_{3}A_{1}}{1-\left(  1-p_{1}\right)  \left(  1-p_{2}\right)  \left(
1-p_{3}\right)  }\label{eq03043}\\
V_{1,3111}  &  =\frac{A_{3}+A_{1}+A_{2}-p_{3}A_{1}-p_{1}A_{2}-p_{3}A_{2}%
+p_{3}p_{1}A_{2}}{1-\left(  1-p_{1}\right)  \left(  1-p_{2}\right)  \left(
1-p_{3}\right)  } \label{eq03044}%
\end{align}
In the same manner we can prove that the payoff systems for $P_{2}$ and
$P_{3}$ have unique solutions and this completes the proof. Note that $\left(
V_{i,n111}\right)  _{i,n\in \left \{  1,2,3\right \}  }$ are actually functions
of $\mathbf{x}_{1}$, $\mathbf{x}_{2}$, $\mathbf{x}_{3}$; for brevity, the
dependence has been suppressed from the notation.
\end{proof}

The structure and stepwise solution of the payoff equations correspond to the
structure of the state transition diagram. Namely, the vertices of the state
transition graph are the game states and the possible transitions are as
follows:\ $S_{1}$ states are terminal, each $S_{2}$ state can transit either
to a single other $S_{2}$ state or to a single $S_{1}$ state, and each $S_{3}$
state can transit either to a single other $S_{3}$ state or one of two $S_{2}$
states. Furthermore, the $S_{3}$ states form a cycle, i.e.,
\[
...\rightarrow \left(  1111\right)  \rightarrow \left(  2111\right)
\rightarrow \left(  3111\right)  \rightarrow \left(  1111\right)  \rightarrow
...
\]
These facts, clearly, correspond to the stepwise procedure of solving the
payoff system. For each $n\in \left \{  1,2,3\right \}  $ separately, we first
obtain the payoffs $\left(  V_{n,\mathbf{s}}\right)  _{\mathbf{s\in}S_{1}}$,
then the $\left(  V_{n,\mathbf{s}}\right)  _{\mathbf{s\in}S_{2}}$ and finally
the $\left(  V_{n,\mathbf{s}}\right)  _{\mathbf{s\in}S_{3}}$.

\subsection{The $N$-uel}

We will now write and solve the payoff system which, analogously to
(\ref{eq03005}) and (\ref{eq03004}),\ governs \ the $N$-uel expected total
payoffs. We use, for all $n\in \left \{  1,...,N\right \}  $ and all
$ms_{1}...s_{N}\in S$, the notation
\[
V_{n,ms_{1}...s_{N}}=Q_{n}\left(  ms_{1}...s_{N},\sigma \right)  .
\]
In addition we introduce the following notations. For all $\mathbf{s=}%
ms_{1}...s_{N}\in S$, we define
\begin{align*}
\text{the set of alive players}  &  :L\left(  \mathbf{s}\right)  =\left \{
n:s_{n}=1\right \}  ,\\
\text{the set of alive players other than }m  &  :L_{m}\left(  \mathbf{s}%
\right)  =\left \{  n:s_{n}=1\text{ and }n\neq m\right \}  .
\end{align*}
For all $\mathbf{s}=ms_{1}...s_{N}\in S$, we will use
$\mathbf{N}\left(  \mathbf{s}\right)  $ to denote the state following $\mathbf{s}$
when no player is killed. 
For example, when $N=4$ we have \
\[
\mathbf{N}\left(11111\right)=21111,\quad
\mathbf{N}\left(11011\right)=31011,\quad 
\mathbf{N}\left(41111\right)=11111,\quad... \quad .
\]
Furthermore, for all $\mathbf{s}=ms_{1}...s_{N}\in S$, we will use
$\mathbf{N}_i\left(  \mathbf{s}\right)  $ to denote the 
the state following $\mathbf{s}$
when $P_i$ is killed. 
For example, when $N=4$ we have \
\[
\mathbf{N}_2\left(11111\right)=31011,\quad 
\mathbf{N}_4\left(31111\right)=11110,\quad... \quad .
\]
Finally, for all $n\in \left \{  1,...,N\right \}  $ and all $\mathbf{s}%
=ms_{1}...s_{N}\in S$, the probability that $P_{m}$ shoots $P_{n}$, when the
state is $\mathbf{s}$ and $P_{m}$ uses $\sigma_{m}$, is:
\[
x_{ms_{1}...s_{N},n}=\Pr \left(  a=n|P_{m}\text{ uses strategy }\sigma
_{m}\text{, current state is }\mathbf{s}=ms_{1}...s_{N}\right)  .
\]

\noindent Using the above notations and assuming a given strategy profile
$\sigma$ (which determines all the shooting probabilities $x_{ms_{1}...s_{N},n}$) 
the expected total payoffs for $P_{1}$ satisfy the following equations.

\begin{enumerate}
\item At terminal states we have:%
\begin{equation}
V_{1,0s_{1}...s_{N}}=1 \text{ when } s_1=1 \text{ and } V_{1,0s_{1}...s_{N}}=0 \text{ when } s_1=0. \label{eq03006}%
\end{equation}

\item At all admissible states with two alive players $P_1$ and $P_m$ we have:%
\begin{equation}
V_{1,1s_{1}...s_{N}}=\frac{p_{1}}{p_{1}+p_{m}-p_{1}p_{m}} 
\text{ and }
V_{1,ms_{1}...s_{N}}=\frac{p_{1}\left(1-p_{m}\right)  }{p_{1}+p_{m}-p_{1}p_{m}},
\label{eq03007}%
\end{equation}
which are $P_{1}$'s winning probabilities in a duel against $P_{m}$.
\item At all admissible states $is_{1}...s_{N}$ with two alive players, both different from $P_1$, we have:%
\begin{equation}
V_{1,is_{1}...s_{N}}=0
\label{eq03007a}
\end{equation}

\item At all admissible states with more than two alive players: 
for all $k\in \left \{3,...,N\right \}  $, $\mathbf{s}=ms_{1}...s_{N}\in S_{k}$, we have:
\begin{align}
\text{when }1  &  \in L\left(  \mathbf{s}\right)  :V_{1,ms_{1}...s_{N}}=
\left(1-p_{m}\right) V_{1,\mathbf{N}\left(\mathbf{s}\right)}+
\sum_{n\in L_{m}\left(  \mathbf{s}\right)  }
x_{ms_{1}...s_{N},n}p_{m}V_{1,\mathbf{N}_n\left(\mathbf{s}\right)},
\label{eq03008}\\
\text{when }1  &  \not \in L\left(  \mathbf{s}\right)  :V_{1,ms_{1}...s_{N}%
}=0. \label{eq03009}%
\end{align}

\end{enumerate}

\noindent The payoff system which must be satisfied by the $V_{1,ms_{1}%
...s_{N}}$'s consists of the equations (\ref{eq03006})-(\ref{eq03009}).
Similar systems are satisfied by the $V_{n,ms_{1}...s_{N}}$'s, for
$n\in \left \{  2,...,N\right \}  $.

\begin{proposition}
\normalfont For every $N\in \left \{  2,3,...\right \}  $, for any $p_{1}$, ...,
$p_{N}$ such that $m\neq n\Rightarrow p_{m}\neq p_{n}$, and every admissible
strategy profile $\sigma$, the $N$-uel payoff system has a unique solution.
\end{proposition}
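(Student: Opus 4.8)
The plan is to induct on $k$, the number of alive players, showing that for each fixed $n\in\{1,\dots,N\}$ the subsystem governing the variables $(V_{n,\mathbf{s}})_{\mathbf{s}\in S_k}$ has a unique solution once the variables for states in $S_{k-1},\dots,S_1$ have been determined. The base cases $k=1$ and $k=2$ are immediate: equations \eqref{eq03006} pin down all $V_{n,\mathbf{s}}$ for $\mathbf{s}\in S_1$ outright, and \eqref{eq03007}--\eqref{eq03007a} give closed-form values for $\mathbf{s}\in S_2$ (the duel payoffs, whose uniqueness was already established in the $2$-uel analysis). So assume $3\le k\le N$ and that all $V_{n,\mathbf{s}}$ for $\mathbf{s}$ with fewer than $k$ alive players are known. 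For states $\mathbf{s}=ms_1\dots s_N\in S_k$ with $n\notin L(\mathbf{s})$, equation \eqref{eq03009} (the analogue for general $n$) gives $V_{n,\mathbf{s}}=0$ directly. It remains to handle the states $\mathbf{s}\in S_k$ with $n\in L(\mathbf{s})$.

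The key observation is that on the set of states in $S_k$ containing a fixed alive set $L$ with $|L|=k$, the ``no kill'' transition $\mathbf{s}\mapsto\mathbf{N}(\mathbf{s})$ only advances the mover to the next alive player and keeps the alive set fixed; hence these states form a single directed cycle of length $k$ under $\mathbf{N}$, exactly as in the $3$-uel case where $(1111)\to(2111)\to(3111)\to(1111)$. Every ``kill'' transition $\mathbf{s}\mapsto\mathbf{N}_j(\mathbf{s})$ leaves this cycle and lands in $S_{k-1}$, so by the induction hypothesis the terms $x_{ms_1\dots s_N,j}\,p_m\,V_{n,\mathbf{N}_j(\mathbf{s})}$ are known constants. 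Collecting them into a constant $A_{\mathbf{s}}$ (as the proof of the $3$-uel proposition did with $A_1,A_2,A_3$), equation \eqref{eq03008} becomes, for the $k$ states $\mathbf{s}^{(1)},\dots,\mathbf{s}^{(k)}$ on the cycle,
\[
V_{n,\mathbf{s}^{(j)}}-\left(1-p_{m_j}\right)V_{n,\mathbf{s}^{(j+1)}}=A_{\mathbf{s}^{(j)}},\qquad j=1,\dots,k\ (\text{indices mod }k),
\]
where $m_j$ is the mover in state $\mathbf{s}^{(j)}$. This is a linear system whose matrix is $I$ minus a single $k$-cycle permutation scaled by the factors $(1-p_{m_j})$; its determinant is $1-\prod_{j=1}^{k}(1-p_{m_j})$, which is strictly positive because every $p_n\in(0,1)$. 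Hence the system has a unique solution, and the induction step goes through for every alive set of size $k$ and every $n$; after finitely many steps all $V_{n,\mathbf{s}}$, $\mathbf{s}\in S$, are uniquely determined.

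I expect the main obstacle to be purely organizational rather than mathematical: one must verify carefully that the restriction of the payoff equations \eqref{eq03006}--\eqref{eq03009} to the states with a fixed alive set $L$ really does decouple into an independent $k\times k$ cyclic block, i.e.\ that no equation for such a state references another $S_k$ state outside the cycle. This follows from the transition structure described after the $3$-uel proof — a kill strictly decreases the number of alive players, a miss preserves the alive set and merely rotates the mover — but making the indexing precise for general $N$ (especially tracking which player becomes the mover after a kill, encoded by $\mathbf{N}_j$) requires some bookkeeping. Note also that the hypothesis $p_m\neq p_n$ for $m\neq n$ is not actually needed for existence and uniqueness here; the determinant $1-\prod_j(1-p_{m_j})>0$ uses only $p_n\in(0,1)$. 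The distinctness hypothesis is presumably retained for consistency with the $3$-uel statement (where it may matter for later non-degeneracy arguments), and I would remark on this rather than invoke it.
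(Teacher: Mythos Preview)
Your proposal is correct and follows essentially the same approach as the paper: both arguments reduce the payoff system to independent cyclic linear blocks (one per alive set) whose coefficient matrix has determinant $1-\prod_j(1-p_{m_j})>0$. The only cosmetic difference is that the paper inducts on $N$ and relabels a $K$-alive subgame as a $K$-uel, whereas you induct directly on the number $k$ of alive players within the fixed $N$-uel; your observation that the distinctness hypothesis $p_m\neq p_n$ is not actually used is also correct.
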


\begin{proof}
We only consider the payoff systems regarding $\left(  V_{1,\mathbf{s}%
}\right)  _{\emph{s}\in S}$ (the cases $\left(  V_{n,\mathbf{s}}\right)
_{\emph{s}\in S}$ with $n\geq2$ are treated similarly). The proof is by
induction. Clearly, for $N=2$, the payoff system has a unique solution, given
by (\ref{eq03001}). Now suppose the $\left(  N-1\right)  $-uel payoff system
has a unique solution and consider the $N$-uel payoff system (\ref{eq03006}%
)-(\ref{eq03009}).

Take any $K\in \left \{  2,...,N-1\right \}  $; for any state $\mathbf{s=}%
s_{0}s_{1}...s_{N}\in S_{K}$, we want to determine (for all $n\in \left \{
1,...,N\right \}  $)\ the corresponding $V_{1,s_{0}s_{1}...s_{N}}=Q_{1}\left(
s_{0}s_{1}...s_{N},\sigma \right)  $. This is $P_{1}$'s payoff in a $N$-uel
involving himself, $K-1$ other alive and $N-K$ dead players, which is the same
as $P_{1}$'s payoff in a $K$-uel involving himself and $K-1$ other alive
players. Hence $V_{n,s_{0}s_{1}...s_{N}}$ can be computed by solving the
respective $K$-uel with $K$ alive players and relabeling the $K$-uel players
and their payoffs so as to correspond with the $K$ alive players of the
$N$-uel. By the inductive assumption, the $K$-uel has a unique solution, hence
the value $V_{1,s_{0}s_{1}...s_{N}}$ is also uniquely determined.

Consequently the $V_{1,s_{0}s_{1}...s_{N}}$'s are uniquely determined for all
$\mathbf{s}=s_{0}s_{1}...s_{N}\in \cup_{K=2}^{N-1}S_{K}$. It remains to show
that the $V_{1,s_{0}s_{1}...s_{N}}$'s with $\mathbf{s}=s_{0}s_{1}...s_{N}\in
S_{N}$ are also uniquely determined. Now, $\mathbf{s}\in S_{N}$ means there
exists $N$ alive players; hence $s_{1}=...=s_{N}=1$ and there exist \ exactly
$N$ such states:%
\[
S_{N}=\left \{  11...1,...,N1...1\right \}  .
\]
Also, since in such states all players are alive, we have
\[
21...1=\mathbf{N}\left(11...1\right),\quad
31...1=\mathbf{N}\left(21...1\right),\quad
...,\quad
11...1=\mathbf{N}\left(N1...1\right).
\]
Each of the above states appears once on the left side of an equation
(\ref{eq03008}) and once on the right side of another equation (\ref{eq03008}%
). Let us rename the corresponding $V_{1,s_{0}s_{1}...s_{N}}$ variables as
follows.
\[
\forall m\in \left \{  1,...,N\right \}  :Z_{m}=V_{1,ms_{1}...s_{N}}.
\]
Let us also define%
\[
\forall m\in \left \{  1,...,N\right \}  :
A_{m}=\sum_{i\in L_{m}\left(\mathbf{s}\right)  }x_{ms_{1}...s_{N},i}p_{m}
V_{1,\mathbf{N}_i\left(ms_{1}...s_{N}\right)}.
\]
It follows that, for all $\mathbf{s}\in S_{N}$, the equations (\ref{eq03008})
can be rewritten in the form
\begin{equation}
\left[
\begin{array}
[c]{ccccc}%
1 & -\left(  1-p_{1}\right)  & 0 & ... & 0\\
0 & 1 & -\left(  1-p_{2}\right)  & ... & 0\\
0 & 0 & 1 & ... & 0\\
... & ... & ... & ... & ...\\
-\left(  1-p_{N}\right)  & 0 & 0 & ... & 1
\end{array}
\right]  \left[
\begin{array}
[c]{c}%
Z_{1}\\
Z_{2}\\
Z_{3}\\
...\\
Z_{N}%
\end{array}
\right]  =\left[
\begin{array}
[c]{c}%
A_{1}\\
A_{2}\\
A_{3}\\
...\\
A_{N}%
\end{array}
\right]  \label{eq03011}%
\end{equation}
Furthermore, since the states $\mathbf{N}_i\left(ms_{1}...s_{N}\right)\in S_{N-1}$, the
$V_{1,\mathbf{N}_i\left(ms_1...s_N\right)}$'s are uniquely determined as solutions of an
$\left(  N-1\right)  $-uel.

A necessary and sufficient condition for the system (\ref{eq03011}) to have a
unique solution is that the determinant%
\[
D_{N}=\left \vert
\begin{array}
[c]{ccccc}%
1 & -\left(  1-p_{1}\right)  & 0 & ... & 0\\
0 & 1 & -\left(  1-p_{2}\right)  & ... & 0\\
0 & 0 & 1 & ... & 0\\
... & ... & ... & ... & ...\\
-\left(  1-p_{N}\right)  & 0 & 0 & ... & 1
\end{array}
\right \vert
\]
is different from zero. It is easily proved that
\[
D_{N}=1-\left(  1-p_{1}\right)  \left(  1-p_{2}\right)  ...\left(
1-p_{N}\right)  .
\]
Since, by assumption, for all $n$ we have $p_{n}\in \left(  0,1\right)  $, it
follows that $D_{N}>0$ and the inductive step is completed.
\end{proof}

\bigskip

\noindent Now our problem is to solve the system of payoff equations and the
above proof suggests a solution method. In what follows we define, for every
$i\in \left \{  1,...,N\right \}  $ and every $S^{\prime}\subseteq S$, the
\emph{payoffs vector} $\mathbf{V}_{i,S^{\prime}}=\left(  V_{i,\mathbf{s}%
}\right)  _{\mathbf{s}\in S^{\prime}}$, i.e., the vector of all payoffs
indexed by players and states. For example, let $N=3$; then the sets of all
states with one and two surviving players are, respectively,
\[
S_{1}=\left \{  0100,0010,0001\right \}, \quad S_{2}=\left \{  1110,2110,1101,3101,2011,3011\right \}
\]
and the corresponding payoff vectors are 
\begin{align*}
\mathbf{V}_{1,S_{1}}  &  =\left(  V_{1,0100},V_{2,0100},V_{3,0100}%
,V_{1,0010},V_{2,0010},V_{3,0010},V_{1,0001},V_{2,0001},V_{3,0001}\right) \\
\mathbf{V}_{1,S_{2}}  &  =\left(  V_{1,1110},V_{2,1110},V_{3,1110}%
,V_{1,2110},...,V_{2,3011},V_{3,3011}\right)  .
\end{align*}
Initialization of the $\mathbf{V}_{i,S_{1}}$'s, i.e., the set of payoffs (to
all players)\ for states with a single surviving player, is immediate; for
example when $N=3$ we have
\begin{align*}
V_{1,0100}  &  =1,\quad V_{2,0100}=0,\quad V_{3,0100}=0,\\
V_{1,0010}  &  =0,\quad V_{2,0010}=1,\quad V_{3,0010}=0,\\
V_{1,0001}  &  =0,\quad V_{2,0001}=0,\quad V_{3,0001}=1,
\end{align*}
and similar values are obtained for $\mathbf{V}_{2,S_{1}}$ and $\mathbf{V}%
_{3,S_{1}}$.

The function \textsc{SolveNuel}, presented below in pseudocode, computes the
$\mathbf{V}_{i,S}$ vectors (for $i\in \left \{  1,...,N\right \}  $)\ as follows. \ 

\begin{enumerate}
\item The function inputs are:\ 

\begin{enumerate}
\item the number of players $N$,

\item the player of interest $i$,

\item the vector of marksmanships $\mathbf{p}=\left(  p_{1},...,p_{N}\right)
$ and

\item the strategy profile $\sigma$.
\end{enumerate}

\item At inialization, the one-player payoff vector $\mathbf{V}_{i,S_{1}}$ is computed.

\item In the outer loop of the function, $K$ is the number of living players,
from $K=2$ to $K=N$; for each $K$ we create the set $\mathcal{C}$ of
$\binom{N}{K}$ combinations of living players.

\item In the inner loop, we solve a $K$-uel for each player set $C=\left \{
n_{1},n_{2},...,n_{K}\right \}  \in \mathcal{C}$. This involves solving a system
of the $K$ unknown $\mathbf{V}_{S_{C}}$'s; when obtained these are used to
gradually populate the elements of the \textquotedblleft
full\textquotedblright \ $\mathbf{V}_{i,S}$ vector.

\item On completion of both loops, all components of the $\mathbf{V}_{i,S}$
vector have been computed and the function returns $\mathbf{V}_{i,S}$.
\end{enumerate}

\bigskip

\begin{minipage}[c]{0.95\linewidth}
\begin{algorithm}[H]
\caption{Function for recursive $N$-uel solution}
\label{alg:cap}
\begin{algorithmic}
\Function{SolveNuel}{$N,i,\mathbf{p},\sigma$}
\State Construct the state set $S$ corresponding to the $N$-uel.
\State Compute $\mathbf{V}_{i,S_{1}}$
\For{$K=2..N$}
\State Let $\mathcal{C}$ be the set of all combinations of $K$ players out of $\left \{  1,...,N\right \}  $
\For{$C\in \mathcal{C}$}
\State Let $S_{C}$ be the set of states corresponding to player subset $C$
\State Compute $\mathbf{V}_{i,S_{C}}$ by solving a $K$-uel
\EndFor
\EndFor
\State Return $\mathbf{V}_{i,S}$
\EndFunction
\end{algorithmic}
\end{algorithm}
\end{minipage}\bigskip

The $\mathbf{V}_{i,S_{k}}$ values, for $k\in \left \{  3,...,N\right \}  $, are
obtained by solving $\binom{N}{K}$ systems, each involving $K$ unknowns. Exact
values can be obtained by Cramer's rule or matrix inversion. However, we have
found that implementation is simpler when the following iterative algorithm is
used. We first present the algorithm and then prove its correctness.

\begin{enumerate}
\item The function inputs are:\ 

\begin{enumerate}
\item the number of alive players $K$,

\item the player of interest $i$,

\item the vector of marksmanships $\mathbf{p}=\left(  p_{1},...,p_{N}\right)
$,

\item the strategy profile $\sigma$,

\item the payoffs vector for $K-1$ players $\mathbf{V}_{i,S_{K-1}}$ and

\item the termination parameter $\varepsilon$.
\end{enumerate}

\item We initialize, for all states $ms_{1}...s_{N}\in S_{K}$, at arbitrary
values $V_{i,ms_{1}...s_{N}}^{\left(  0\right)  }$.

\item Then we iterate, for $t\in \left \{  0,1,2,...\right \}  $ and for each
state $ms_{1}...s_{N}\in S_{K}$, to obtain new $V_{i,ms_{1}...s_{N}}^{\left(
t+1\right)  }$ values by (\ref{eq03012}).

\item If at some iteration $t$ we have $\max_{i\in \left \{  1,...,K\right \}
,\mathbf{s}\in S_{K}}\left \vert V_{i,\mathbf{s}}^{\left(  t+1\right)
}-V_{i,\mathbf{s}}^{\left(  t\right)  }\right \vert <\varepsilon$, the
algorithm terminates and returns $\mathbf{V}_{i,S}=\mathbf{V}_{i,S}^{\left(
t+1\right)  }$.
\end{enumerate}

\bigskip

\begin{minipage}[c]{0.95\linewidth}
\begin{algorithm}[H]
\caption{Iterative Solution of Payoff System}
\label{alg:cap}
\begin{algorithmic}
\Function{IterSolve}{$K,i,\mathbf{p},\sigma,\mathbf{V}_{S_{K-1}},\varepsilon$}
\For{$\mathbf{s}=ms_{1}...s_{N}\in S_{K}$}
\State $V_{i,ms_{1}...s_{N}}^{\left(0\right)}$ arbitrary
\EndFor
\For{$t\in \left \{0,1,2,...\right \}  $}
\For{$\mathbf{s}=ms_{1}...s_{N}\in S_{K}$}
\State
\begin{equation}
V_{i,\mathbf{s}}^{\left(t+1\right)}
=
\left(1-p_{m}\right)V^{(t)}_{i,\mathbf{N}(\mathbf{s})}+
\sum_{n\in L_m(\mathbf{s})}x_{\mathbf{s},n}p_{m}V_{i,\mathbf{N}_n(\mathbf{s})}
\label{eq03012}
\end{equation}
\EndFor
\If{$\max_{\mathbf{s}\in S_{K}}\left \vert V_{i,\mathbf{s}}^{\left(  t+1\right)}-V_{i,\mathbf{s}}^{\left(  t\right)  }\right \vert <\varepsilon$}
\State Break
\EndIf
\EndFor
\State
$\mathbf{V}_{i,S_K}=\mathbf{V}_{i,S_K}^{\left(t+1\right)  }$
\State Return $\mathbf{V}_{i,S_K}$
\EndFunction
\end{algorithmic}
\end{algorithm}
\end{minipage}\bigskip

\noindent The following proposition shows that, for given strategy profile
$\sigma$, the IPC\ algorithm yields in the limit the payoffs of the $N$-uel.

\noindent

\begin{proposition}
\normalfont For every $K\in \left \{  2,3,...\right \}  $, $i\in \left \{
1,2,...,K\right \}  $, for any $p_{1}$, ..., $p_{K}$ such that $m\neq
n\Rightarrow p_{m}\neq p_{n}$, and for every admissible strategy profile
$\sigma$, the iterative solution of the payoff system always converges and we
have%
\[
\forall \mathbf{s\in S}_{K}:\lim_{t\rightarrow \infty}V_{i,\mathbf{s}}^{\left(
t+1\right)  }=V_{i,\mathbf{s}}.
\]

\end{proposition}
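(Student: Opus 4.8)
The plan is to fix $K$, $i$, the marksmanships, and the strategy profile $\sigma$, and to show that the iteration (\ref{eq03012}) is the fixed-point iteration of an affine map $F:\mathbb{R}^{|S_K|}\to\mathbb{R}^{|S_K|}$ whose unique fixed point is the true payoff vector $(V_{i,\mathbf{s}})_{\mathbf{s}\in S_K}$. First I would observe that the quantities $V_{i,\mathbf{N}_n(\mathbf{s})}$ appearing on the right-hand side of (\ref{eq03012}) all lie in $S_{K-1}$ and are therefore already known constants — supplied as the input $\mathbf{V}_{i,S_{K-1}}$, computed correctly by the inductive hypothesis of the previous proposition. Hence the whole sum $\sum_{n\in L_m(\mathbf{s})}x_{\mathbf{s},n}p_m V_{i,\mathbf{N}_n(\mathbf{s})}$ is a constant $A_{\mathbf{s}}$, and (\ref{eq03012}) reads $V^{(t+1)}_{i,\mathbf{s}}=(1-p_m)V^{(t)}_{i,\mathbf{N}(\mathbf{s})}+A_{\mathbf{s}}$, i.e. $\mathbf{V}^{(t+1)}=M\mathbf{V}^{(t)}+\mathbf{A}$ where $M$ is the matrix with a single nonzero entry $(1-p_m)$ in row $\mathbf{s}$, in the column indexed by $\mathbf{N}(\mathbf{s})$.

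The key structural fact is that, restricted to states with $s_{s_0}=1$ (the player with the move is alive) and exactly $K\ge 2$ players alive, the map $\mathbf{s}\mapsto\mathbf{N}(\mathbf{s})$ permutes the $K$ states with a given alive-set in a single cycle of length $K$: the move passes around the $K$ living players cyclically. Therefore $M$, after a suitable relabelling within each alive-set, is block-diagonal with each block a $K\times K$ cyclic-shift matrix scaled by the product of the relevant $(1-p_m)$ factors. I would then compute $M^K$ on such a block: it is $\bigl(\prod_{m\ \text{alive}}(1-p_m)\bigr)\cdot I$. Since every $p_m\in(0,1)$, this scalar is strictly less than $1$, so the spectral radius of $M$ satisfies $\rho(M)^K=\rho(M^K)\le\prod(1-p_m)<1$, hence $\rho(M)<1$. (Alternatively one notes $I-M$ has determinant $1-\prod(1-p_m)=D_K>0$ on each block, so $I-M$ is invertible; combined with the $M^K$ computation this gives convergence.)

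With $\rho(M)<1$ in hand, the affine iteration $\mathbf{V}^{(t+1)}=M\mathbf{V}^{(t)}+\mathbf{A}$ converges, for every starting vector $\mathbf{V}^{(0)}$, to the unique solution of $(I-M)\mathbf{V}=\mathbf{A}$ — this is the standard fact that $\mathbf{V}^{(t)}=M^t\mathbf{V}^{(0)}+\sum_{j=0}^{t-1}M^j\mathbf{A}\to (I-M)^{-1}\mathbf{A}$ when $\rho(M)<1$. Finally I would identify $(I-M)\mathbf{V}=\mathbf{A}$ with the true payoff equations: these are exactly the equations (\ref{eq03008}) for states in $S_K$ (the case $1\in L(\mathbf{s})$; the case $1\notin L(\mathbf{s})$ has $V=0$ and is handled trivially since then all the $\mathbf{N}_n(\mathbf{s})$ and $\mathbf{N}(\mathbf{s})$ terms vanish or are zero by (\ref{eq03009})), whose solution is the genuine payoff vector. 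Hence $\lim_{t\to\infty}V^{(t+1)}_{i,\mathbf{s}}=V_{i,\mathbf{s}}$ for all $\mathbf{s}\in S_K$. The main obstacle is making the cyclic block-structure of $\mathbf{N}(\cdot)$ precise — that on each alive-set of size $K$ the "next mover" map is a single $K$-cycle — and being careful that the proposition is really an inductive statement layered on the previous one (the $S_{K-1}$ values feeding $\mathbf{A}$ are correct), so that the statement for $K$ presupposes correctness at level $K-1$; everything after that is the routine linear-algebra fact about affine iterations with $\rho(M)<1$.
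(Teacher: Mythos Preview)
Your proof is correct, but it takes a genuinely different route from the paper's. The paper argues directly that the affine iteration is a contraction in the $\ell^{1}$ norm: it runs the same iteration from two initial vectors $V^{(0)}$ and $U^{(0)}$, notes that $\lvert V^{(t+1)}_{i,\mathbf{s}}-U^{(t+1)}_{i,\mathbf{s}}\rvert=(1-p_{m})\lvert V^{(t)}_{i,\mathbf{N}(\mathbf{s})}-U^{(t)}_{i,\mathbf{N}(\mathbf{s})}\rvert$, sums over $S_{K}$ (using implicitly that $\mathbf{N}$ permutes $S_{K}$), and bounds the contraction factor by $1-\min_{m}p_{m}$ per step. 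Your argument instead exploits the exact cyclic structure to compute $M^{K}=\bigl(\prod_{m}(1-p_{m})\bigr)I$ on each block and conclude $\rho(M)<1$. Your route yields the sharper asymptotic rate $\bigl(\prod_{m}(1-p_{m})\bigr)^{1/K}$ rather than the cruder $1-\min_{m}p_{m}$, and it makes the connection to the determinant $D_{K}=1-\prod_{m}(1-p_{m})$ from the previous proposition transparent; the paper's route is shorter and avoids any spectral language, needing only the triangle inequality and the bijectivity of $\mathbf{N}$ on $S_{K}$. Both arrive at the same fixed point, identified with the payoff equations~(\ref{eq03008}).
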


\begin{proof}
Consider the same iteration starting from two different initial conditions
$V_{i,S_{K}}^{\left(  0\right)  }$ and $U_{i,S_{K}}^{\left(  0\right)  }$.
Then we have%
\begin{align*}
\forall \mathbf{s}  &  =ms_{1}...s_{N}\in S_{K}:
V_{i,\mathbf{s}}^{\left(t+1\right)  }=\left(  1-p_{m}\right)  V_{i,\mathbf{N}(\mathbf{s})}^{(t)}+
\sum_{n\in L_{m}(\mathbf{s})}x_{\mathbf{s},n}p_{m}V_{i,\mathbf{N}_n(\mathbf{s})}\\
\forall \mathbf{s}  &  =ms_{1}...s_{N}\in S_{K}:
U_{i,\mathbf{s}}^{\left(t+1\right)  }=\left(  1-p_{m}\right)  U_{i,\mathbf{N}(\mathbf{s})}^{(t)}+
\sum_{n\in L_{m}(\mathbf{s})}x_{\mathbf{s},n}p_{m}V_{i,\mathbf{N}_n(\mathbf{s})}
\end{align*}
We then have
\begin{align*}
&  \forall \mathbf{s}=ms_{1}...s_{N}\in S_{K}:\left \vert V_{i,\mathbf{s}%
}^{\left(  t+1\right)  }-U_{i,\mathbf{s}}^{\left(  t+1\right)  }\right \vert=
\left(1-p_{m}\right)  \left \vert V_{i,\mathbf{N}(\mathbf{s})}^{(t)}-U_{i,\mathbf{N}(\mathbf{s})}^{(t)}\right \vert
\Rightarrow \\
&  \sum_{\mathbf{s}\in S_{K}}\left \vert V_{i,\mathbf{s}}^{\left(  t+1\right)
}-U_{i,\mathbf{s}}^{\left(  t+1\right)  }\right \vert \leq \sum_{\mathbf{s}\in
S_{K}}\left(  1-\min_{m}p_{m}\right)  \left \vert V_{i,\mathbf{N}%
(\mathbf{s})}^{(t)}-U_{i,\mathbf{N}(\mathbf{s})}^{(t)}\right \vert \Rightarrow \\
&  \sum_{m\in L\left(  \mathbf{s}\right)  }\left \vert V_{i,\mathbf{s}%
}^{\left(  t+1\right)  }-U_{i,\mathbf{s}}^{\left(  t+1\right)  }\right \vert
\leq \left(  1-\min_{m}p_{m}\right)  \sum_{\mathbf{s}\in S_{K}}\left \vert
V_{i,\mathbf{N}(\mathbf{s})}^{(t)}-U_{i,\mathbf{N}(\mathbf{s})}^{(t)}\right \vert \Rightarrow \\
&  \sum_{m\in L\left(  \mathbf{s}\right)  }\left \vert V_{i,\mathbf{s}%
}^{\left(  t+1\right)  }-U_{i,\mathbf{s}}^{\left(  t+1\right)  }\right \vert
\leq \left(  1-\min_{m}p_{m}\right)  ^{t+1}\sum_{\mathbf{s}\in S_{K}}\left \vert
V_{i,\mathbf{N}(\mathbf{s})}^{(0)}-U_{i,\mathbf{N}(\mathbf{s})}^{(0)}\right \vert .
\end{align*}
Since $\left \vert 1-\min_{m}p_{m}\right \vert \in \left(  0,1\right)  $, we
have
\[
\lim_{t\rightarrow \infty}\sum_{m\in L\left(  \mathbf{s}\right)  }\left \vert
V_{i,\mathbf{s}}^{\left(  t+1\right)  }-U_{i,\mathbf{s}}^{\left(  t+1\right)
}\right \vert =0
\]
This means that, for every $i$ and $\mathbf{s}$, the iteration tends to a
unique limit%
\[
\forall i,\mathbf{s}:\overline{V}_{i,\mathbf{s}}=\lim_{t\rightarrow \infty}%
\sum_{m\in L\left(  \mathbf{s}\right)  }V_{i,\mathbf{s}}^{\left(  t\right)  }%
\]
and we have
\[
\forall i,\mathbf{s}:\overline{V}_{i,\mathbf{s}}=\left(  1-p_{m}\right)
\overline{V}_{i,\mathbf{N}(\mathbf{s})}+
\sum_{n\in L_{m}(\mathbf{s})}x_{\mathbf{s},n}p_{m}V_{i,\mathbf{N}_n(\mathbf{s})}.
\]
In other words $\overline{V}_{i,S_{K}}$ satisfies the payoff equations, which
means that the iteration yields the unique solution of the payoff system.
\end{proof}

It should be pointed out that Algorithms 1 and 2 are \emph{guaranteed} to work
(i.e., solve the $N$-uel)\ when the $p_{n}$ marksmanships belong to $\left(
0,1\right)  $; but the algorithms \emph{may} also work even when some of the
$p_{n}$'s are equal to zero or to one.

\section{$N$-uel Stationary Equilibria\label{sec04}}

We are now ready to study the existence of $N$-uel equilibria. For every $N$
we have to solve a separate system of \emph{nonlinear} equations. For clarity
of presentation, we will first deal with the $3$-uel and then for the general
$N$-uel.

\begin{proposition}
\normalfont For any $p_{1}$, $p_{2}$, $p_{3}$ such that $m\neq n\Rightarrow
p_{m}\neq p_{n}$, the $3$-uel has a unique stationary deterministic Nash equilibrium
$\widehat{\sigma}$, which can be described as follows
\[
\widehat{\sigma}\left(  1111\right)  =\left(  0,\widehat{x}_{12},1-\widehat
{x}_{12}\right)  ,\quad \widehat{\sigma}\left(  2111\right)  =\left(
1-\widehat{x}_{23},0,\widehat{x}_{23}\right)  ,\quad \widehat{\sigma}\left(
3111\right)  =\left(  \widehat{x}_{31},1-\widehat{x}_{31},0\right)  ,
\]
where%
\[
\widehat{x}_{12}=\left \{
\begin{array}
[c]{ll}%
1 & \text{iff }p_{2}>p_{3}\\
0 & \text{else}%
\end{array}
\right.  ,\quad \widehat{x}_{23}=\left \{
\begin{array}
[c]{ll}%
1 & \text{iff }p_{3}>p_{1}\\
0 & \text{else}%
\end{array}
\right.  ,\quad \widehat{x}_{31}=\left \{
\begin{array}
[c]{ll}%
1 & \text{iff }p_{1}>p_{2}\\
0 & \text{else}%
\end{array}
\right.  .
\]
In other words, when in equilibrium, at every turn each player shoots at his
\textquotedblleft strongest\textquotedblright \ opponent with probability one.
\end{proposition}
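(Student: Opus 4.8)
The plan is to reduce the equilibrium problem to three independent one-dimensional best-response computations and then to show that each best response is in fact a strictly dominant action, which makes both existence and uniqueness immediate. First I would observe that in every state with at most two living players the acting player has no genuine choice: at a state of $S_2$ he must shoot the unique surviving opponent, and at a state of $S_1$ nothing happens. Hence a stationary profile $\sigma$ is completely determined by the three mixing probabilities $x_{12},x_{23},x_{31}\in[0,1]$ used at the states $1111$, $2111$, $3111$ (with $x_{13}=1-x_{12}$, etc.), and these are respectively the only decision states of $P_1$, $P_2$ and $P_3$. By the previous proposition, for any such profile the payoff system has a unique solution, so every $V_{i,\mathbf{s}}$ is a well-defined function of $(x_{12},x_{23},x_{31})$.

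Next, fixing $\sigma_2,\sigma_3$, I would analyze $P_1$'s problem at state $1111$. Whether $P_1$ aims at $P_2$ or at $P_3$, a miss (probability $1-p_1$) sends the game to the same state $2111$, whereas a hit sends it to $3101$ (if $P_2$ is killed) or to $2110$ (if $P_3$ is killed); since $3101,2110\in S_2$, their values $V_{1,3101}$ and $V_{1,2110}$ are the duel payoffs already computed (equation (\ref{eq03003}) and the analogous formula for $V_{1,2110}$), which do not depend on $x_{12},x_{23},x_{31}$ at all. Consequently $V_{1,1111}$ is affine in $x_{12}$ with coefficient proportional to $V_{1,3101}-V_{1,2110}$ (after dividing by the positive determinant $1-(1-p_1)(1-p_2)(1-p_3)$ that appears in the proof of the previous proposition), and the same sign of slope propagates to $V_{1,\mathbf{s}}$ for every state $\mathbf{s}$ in which $P_1$ is alive, so that maximizing $P_1$'s payoff from any starting state reduces to the same one-dimensional maximization. (Alternatively one can invoke the one-shot-deviation principle, valid because the game ends almost surely by Proposition \ref{prop0201}.) The key algebraic step is then to compute $V_{1,3101}-V_{1,2110}=\dfrac{p_1(p_2-p_3)}{(p_1+p_2-p_1p_2)(p_1+p_3-p_1p_3)}$, whose sign is that of $p_2-p_3$. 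Since $p_2\neq p_3$, this slope is strictly nonzero, so the unique maximizer is $\widehat{x}_{12}=1$ if $p_2>p_3$ and $\widehat{x}_{12}=0$ otherwise; crucially this is independent of $x_{23},x_{31}$, i.e. a strictly dominant action, and it says exactly that ``$P_1$ shoots the stronger of $P_2,P_3$''.

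Finally, by relabelling the players the identical computation gives $P_2$'s dominant action at $2111$ ($\widehat{x}_{23}=1$ iff $p_3>p_1$) and $P_3$'s at $3111$ ($\widehat{x}_{31}=1$ iff $p_1>p_2$). Because each player has a strictly dominant action that ignores the others' choices, the profile $\widehat{\sigma}$ in the statement is a Nash equilibrium, and any stationary Nash equilibrium must have each player playing his unique best response, so $\widehat{\sigma}$ is the only one and is automatically deterministic (mixing is never optimal, the relevant slopes being nonzero). I do not anticipate a serious obstacle: the only real work is the affine-dependence/determinant bookkeeping inherited from the proof of the previous proposition and the one-line sign computation for $V_{1,3101}-V_{1,2110}$. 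The conceptual point that makes uniqueness essentially free is recognizing that a player's optimal target at his decision state is governed purely by the two relevant duel values and therefore does not react to the opponents' strategies.
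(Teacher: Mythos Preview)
Your proposal is correct and follows essentially the same approach as the paper: both arguments observe that $P_1$'s payoff at his decision state is affine in $x_{12}$ with slope governed by the difference of the two relevant duel values $V_{1,3101}-V_{1,2110}$, compute that this difference has the sign of $p_2-p_3$, and note that this is independent of the opponents' choices. Your explicit framing in terms of strictly dominant actions makes the uniqueness argument slightly cleaner than the paper's phrasing (``best response to any $P_2$ and $P_3$ strategy''), but the substance is identical.
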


\begin{proof}
It suffices to consider $P_{1}$'s point of view. His only strategy choice is
when the game is in state $\mathbf{s}=1111$ (in all other states, there exists
a unique admissible strategy), i.e., $P_{1}$ must choose $x_{12}$ and $x_{13}$
(subject to $x_{12}\geq0$, $x_{13}\geq0$, $x_{12}+x_{13}=1$) so as to maximize
$V_{1,1111}$, $V_{1,2111}$ and $V_{1,3111}$. Recall that $V_{1,1111}$ is given
from (\ref{eq03042}):
\[
V_{1,1111}=\frac{A_{1}+A_{2}+A_{3}-p_{1}A_{2}-p_{2}A_{3}-p_{1}A_{3}+p_{1}%
p_{2}A_{3}}{1-\left(  1-p_{1}\right)  \left(  1-p_{2}\right)  \left(
1-p_{3}\right)  }%
\]
where $A_{1},A_{2},A_{3}$ are given from (\ref{eq03041}). Note that $x_{12}$
and $x_{13}$ only appear in $A_{1}$. Hence $P_{1}$ wants to maximize
\begin{align*}
F_{1}\left(  x_{12},x_{13}\right)   &  =\frac{A_{1}}{1-\left(  1-p_{1}\right)
\left(  1-p_{2}\right)  \left(  1-p_{3}\right)  }\\
&  =\frac{x_{12}p_{1}V_{1,3101}+x_{13}p_{1}V_{1,2110}}{1-\left(
1-p_{1}\right)  \left(  1-p_{2}\right)  \left(  1-p_{3}\right)  }\\
&  =\frac{x_{12}p_{1}\frac{p_{1}\left(  1-p_{3}\right)  }{p_{1}+p_{3}%
-p_{1}p_{3}}+x_{13}p_{1}\frac{p_{1}\left(  1-p_{2}\right)  }{p_{1}+p_{2}%
-p_{1}p_{2}}}{1-\left(  1-p_{1}\right)  \left(  1-p_{2}\right)  \left(
1-p_{3}\right)  }\\
&  =p_{1}^{2}\frac{x_{12}\left(  1-p_{3}\right)  \left(  p_{1}+p_{2}%
-p_{1}p_{2}\right)  +x_{13}\left(  1-p_{2}\right)  \left(  p_{1}+p_{3}%
-p_{1}p_{3}\right)  }{\left(  1-\left(  1-p_{1}\right)  \left(  1-p_{2}%
\right)  \left(  1-p_{3}\right)  \right)  \left(  p_{1}+p_{2}-p_{1}%
p_{2}\right)  \left(  p_{1}+p_{3}-p_{1}p_{3}\right)  }%
\end{align*}
subject to the constraints $x_{12}+x_{13}=1,x_{12}\geq0,x_{13}\geq0$. Since
the denominator is positive, it suffices to choose $x_{12}$ (and consequently
$x_{13}=1-x_{12}$) so as to maximize%
\[
x_{12}\left(  1-p_{3}\right)  \left(  p_{1}+p_{2}-p_{1}p_{2}\right)
+x_{13}\left(  1-p_{2}\right)  \left(  p_{1}+p_{3}-p_{1}p_{3}\right)  .
\]
Finally, since
\[
\left(  1-p_{3}\right)  \left(  p_{1}+p_{2}-p_{1}p_{2}\right)  -\left(
1-p_{2}\right)  \left(  p_{1}+p_{3}-p_{1}p_{3}\right)  =\allowbreak
p_{2}-p_{3}%
\]
we have
\begin{align*}
p_{2}  &  >p_{3}\Leftrightarrow \left(  1-p_{3}\right)  \left(  p_{1}%
+p_{2}-p_{1}p_{2}\right)  >\left(  1-p_{2}\right)  \left(  p_{1}+p_{3}%
-p_{1}p_{3}\right) \\
p_{2}  &  <p_{3}\Leftrightarrow \left(  1-p_{3}\right)  \left(  p_{1}%
+p_{2}-p_{1}p_{2}\right)  <\left(  1-p_{2}\right)  \left(  p_{1}+p_{3}%
-p_{1}p_{3}\right)
\end{align*}
Hence the optimization rule for $V_{1,1111}$ is simple:%
\begin{align}
\text{if }p_{2}  &  >p_{3}\text{ then }\widehat{x}_{12}=1\text{, }\widehat
{x}_{13}=0\label{eq03032}\\
\text{if }p_{2}  &  <p_{3}\text{ then }\widehat{x}_{12}=0\text{, }\widehat
{x}_{13}=1\nonumber
\end{align}
This rule also maximizes
\begin{align*}
V_{1,2111}  &  =\frac{A_{2}+A_{3}+A_{1}-p_{2}A_{3}-p_{3}A_{1}-p_{2}A_{1}%
+p_{2}p_{3}A_{1}}{1-\left(  1-p_{1}\right)  \left(  1-p_{2}\right)  \left(
1-p_{3}\right)  }\\
&  =\frac{A_{1}-p_{3}A_{1}-p_{2}A_{1}+p_{2}p_{3}A_{1}+A_{2}+A_{3}-p_{2}A_{3}%
}{1-\left(  1-p_{1}\right)  \left(  1-p_{2}\right)  \left(  1-p_{3}\right)
}\\
&  =\frac{\left(  1-p_{2}\right)  \left(  1-p_{3}\right)  A_{1}+A_{2}%
+A_{3}-p_{2}A_{3}}{1-\left(  1-p_{1}\right)  \left(  1-p_{2}\right)  \left(
1-p_{3}\right)  }%
\end{align*}
and%
\begin{align*}
V_{1,3111}  &  =\frac{A_{3}+A_{1}+A_{2}-p_{3}A_{1}-p_{1}A_{2}-p_{3}A_{2}%
+p_{3}p_{1}A_{2}}{1-\left(  1-p_{1}\right)  \left(  1-p_{2}\right)  \left(
1-p_{3}\right)  }\\
&  =\frac{A_{1}-p_{3}A_{1}+A_{3}+A_{2}-p_{1}A_{2}-p_{3}A_{2}+p_{3}p_{1}A_{2}%
}{1-\left(  1-p_{1}\right)  \left(  1-p_{2}\right)  \left(  1-p_{3}\right)
}\\
&  =\frac{\left(  1-p_{3}\right)  A_{1}+A_{3}+A_{2}-p_{1}A_{2}-p_{3}%
A_{2}+p_{3}p_{1}A_{2}}{1-\left(  1-p_{1}\right)  \left(  1-p_{2}\right)
\left(  1-p_{3}\right)  }%
\end{align*}
Hence the rule (\ref{eq03032})\ simulataneously maximizes $V_{1,1111}$,
$V_{1,2111}$ and $V_{1,3111}$. This is the \textquotedblleft strongest
opponent\textquotedblright \ strategy and is $P_{1}$'s best response to any
$P_{2}$ and $P_{3}$ strategy. By a similar analysis we can prove that the
\textquotedblleft strongest opponent\textquotedblright \ strategy is also
$P_{2}$'s and $P_{3}$'s best response. This completes the proof.
\end{proof}

\begin{proposition}
\normalfont For every $N\in \left \{  3,4,...,\right \}$, the $N$-uel has a
stationary deterministic Nash equilibrium.
\end{proposition}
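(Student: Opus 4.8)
The plan is to prove the (formally stronger) statement that the $N$-uel admits a stationary deterministic strategy profile $\widehat{\sigma}$ which is \emph{subgame perfect}: for every admissible initial state $\mathbf{s}(0)$ and every player $P_i$, $\widehat{\sigma}_i$ is a best response to $\widehat{\sigma}_{-i}$ in the game started at $\mathbf{s}(0)$ (even against non-stationary deviations). This profile is built by an explicit level-by-level (i.e.\ induction on the number of alive players) construction, and is then verified by a backward-induction/dynamic-programming argument. It is the natural $N$-player generalization of the ``strongest opponent'' rule of the preceding proposition: in equilibrium the player to move shoots the opponent whose elimination sends the game to the most favorable $(K-1)$-player subgame for him.

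\textbf{Construction.} Define $\widehat{\sigma}$ on $S_2,S_3,\dots,S_N$ in this order. On $S_2$ the unique admissible action (shoot the sole opponent) is forced. Assume $\widehat{\sigma}$ has already been defined on $S_2\cup\cdots\cup S_{K-1}$ for some $K\in\{3,\dots,N\}$. By the payoff-uniqueness proposition (applied to each subgame with at most $K-1$ alive players; its proof uses only $p_n\in(0,1)$), the payoffs $V_{i,\mathbf{s}}=Q_i(\mathbf{s},\widehat{\sigma})$ are then uniquely determined for all $i$ and all $\mathbf{s}\in S_2\cup\cdots\cup S_{K-1}$. For each $\mathbf{s}=ms_1\dots s_N\in S_K$ set $\widehat{\sigma}_m(\mathbf{s})$ equal to the pure strategy ``shoot $P_{n^\ast}$'', where $n^\ast=n^\ast(\mathbf{s})\in\arg\max_{n\in L_m(\mathbf{s})}V_{m,\mathbf{N}_n(\mathbf{s})}$ (ties broken arbitrarily), and $\widehat{\sigma}_j(\mathbf{s})=(0,\dots,0)$ for $j\neq m$. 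This yields a well-defined stationary deterministic $\widehat{\sigma}$.

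\textbf{Verification.} Fix $P_i$ and, for $\widehat{\sigma}_{-i}$ held fixed, let $V^\ast_{i,\mathbf{s}}=\sup_{\sigma_i'}Q_i(\mathbf{s},(\sigma_i',\widehat{\sigma}_{-i}))$ be $P_i$'s best-response value from $\mathbf{s}$; since shooting is mandatory and all $p_n>0$, the induced single-agent problem is a finite absorbing Markov decision process, so $V^\ast$ is well defined and satisfies the usual Bellman equations. We show $V^\ast_{i,\mathbf{s}}=V_{i,\mathbf{s}}$ for all $\mathbf{s}$ by induction on the number $K$ of alive players in $\mathbf{s}$; the cases $K\le 2$ and the case ``$P_i$ dead in $\mathbf{s}$'' are immediate. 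For $K\ge 3$ with $P_i$ alive, the states of $S_K$ having a fixed alive set $C$ (with $|C|=K$) form a single cycle $\mathbf{s}^{(1)}\to\cdots\to\mathbf{s}^{(K)}\to\mathbf{s}^{(1)}$ under the ``nobody dies'' transition $\mathbf{N}(\cdot)$, with the mover running through the members of $C$; exactly one state, say $\mathbf{s}^{(r^\ast)}$, has $P_i$ to move. Write $W_r=V^\ast_{i,\mathbf{s}^{(r)}}$. Using that every successor state in $S_{K-1}$ has $V^\ast_{i,\cdot}=V_{i,\cdot}$ by the induction hypothesis, the Bellman equations become $W_r=(1-p_{j_r})W_{r+1}+B_r$ (indices cyclic), where for $r\neq r^\ast$ the number $B_r=p_{j_r}V_{i,\mathbf{N}_{n^\ast(\mathbf{s}^{(r)})}(\mathbf{s}^{(r)})}$ is fixed by $\widehat{\sigma}$, while for $r=r^\ast$ a miss leads to $\mathbf{N}(\mathbf{s}^{(r^\ast)})=\mathbf{s}^{(r^\ast+1)}$ \emph{regardless of whom $P_i$ shoots}, so that $B_{r^\ast}=p_i\max_{n\in L_i(\mathbf{s}^{(r^\ast)})}V_{i,\mathbf{N}_n(\mathbf{s}^{(r^\ast)})}$, this maximum being attained exactly by the action that $\widehat{\sigma}_i$ prescribes. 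This linear system has the coefficient matrix of (\ref{eq03011}) for the players in $C$, with determinant $D_K=1-\prod_{j\in C}(1-p_j)>0$; solving it gives $W_r=c_rB_{r^\ast}+d_r$ with $c_r>0$ and $d_r$ independent of $P_i$'s action at $\mathbf{s}^{(r^\ast)}$. Hence the single action $\widehat{\sigma}_i(\mathbf{s}^{(r^\ast)})$ simultaneously maximizes $W_r=V^\ast_{i,\mathbf{s}^{(r)}}$ for \emph{every} $r$, and the value it achieves equals $V_{i,\mathbf{s}^{(r)}}$ (the payoff under $\widehat{\sigma}$, which is governed by the same system). Thus $V^\ast_{i,\mathbf{s}}=V_{i,\mathbf{s}}$ for all $\mathbf{s}\in S_K$, closing the induction; taking $\mathbf{s}(0)$ to be any admissible initial state shows $\widehat{\sigma}$ is a Nash equilibrium. (Note that, unlike the $3$-uel, for $N\ge 4$ the $\arg\max$ may be non-unique, so the equilibrium need not be unique.)

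\textbf{Main obstacle.} The delicate point is the cyclic coupling within a single level: each player has only one decision node per level, yet that choice influences his payoff from \emph{all} $K$ states of the level through the cycle, so one must check that the greedy action is optimal from each of them at once. This is precisely where the positivity of the coefficients $c_r$ --- equivalently $D_K>0$, already established --- is used, and it is the $N$-player analogue of the explicit three-way computation in the proof of the previous proposition. A secondary, more routine point is the bookkeeping that legitimizes replacing the subgame continuation values by the fixed numbers $V_{i,\cdot}$ inside the Bellman equations: this rests on the induction hypothesis and on the fact that, because shooting is compulsory and all marksmanships are positive, each level is exited in finite time almost surely (cf.\ Proposition \ref{prop0201}), so the single-agent problem at each level is a genuine finite absorbing MDP.
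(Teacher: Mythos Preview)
Your proof is correct and follows essentially the same route as the paper: both construct $\widehat{\sigma}$ level-by-level via the rule ``shoot the opponent whose elimination leads to the best $(K-1)$-player continuation,'' set up the same cyclic linear system (the matrix of (\ref{eq03011})), and exploit that the coefficient of the mover's own term $B_{r^\ast}$ in every $W_r$ is strictly positive so that one action simultaneously maximizes all $K$ payoffs in the cycle. Your version is somewhat more explicit than the paper's in framing the verification as a single-agent absorbing MDP and in stating (and proving) subgame perfection rather than mere Nash equilibrium, but the underlying argument is the same.
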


\begin{proof}
It suffices to consider $P_{1}$'s problem of determining his equilibrium
strategy $\widehat{\sigma}_{1}$. The proof will be inductive.

For all states $\mathbf{s}\in S_{3}$, $P_{1}$ can determine the value of his
optimal strategy $\widehat{\sigma}_{1}\left(  \mathbf{s}\right)  $ as follows.
For every such state in which he is not alive he has no strategy choice. For
every state in which he is alive, he must solve a $3$-uel against the other
two alive players; he can do this without considering the value of
$\widehat{\sigma}_{1}\left(  \mathbf{s}\right)  $ for states $\mathbf{s}%
\not \in S_{1}\cup S_{2}\cup S_{3}$.

Suppose that $P_{1}$ has determined $\widehat{\sigma}_{1}\left(
\mathbf{s}\right)  $ for all $\mathbf{s}\in \cup_{k=3}^{K-1}S_{k}$; now $\ $he
wants to determine $\  \widehat{\sigma}_{1}\left(  \mathbf{s}\right)  $ for all
states $\mathbf{s}\in S_{K}$. He has nothing to determine for states
$\mathbf{s}\in S_{K}$ in which he is not alive. There exist $\binom{N-1}{K-1}$
sets of states with $P_{1}$ and $K-1$ other players are alive. Let $S^{\prime
}$ be any such set and let the alive players be $n_{1},...,n_{K}$; in
particular, let $n_{1}=1$. With an appropriate state reordering, we can write
$S^{\prime}$ as
\[
S^{\prime}=\left \{  \mathbf{s}_{1},...,\mathbf{s}_{K}\right \}
\]
where $P_{n_{k}}$ is the player having the move in $\mathbf{s}_{k}$; in
particular, $\mathbf{s}_{1}$ is the state in which $P_{1}$ has the move. Now,
letting
\[
\forall k\in \left \{  1,...,K\right \}  :\left \{
\begin{array}
[c]{l}%
Z_{k}=V_{1,\mathbf{s}_{k}}\\
A_{k}=\sum_{i\in L_{k}\left(  \mathbf{s}_{k}\right)  }x_{\mathbf{s}_{k},i}p_{n_{k}}
V_{1,\mathbf{N}_i\left(\mathbf{s}_{k}\right)}
\end{array}
\right.  ,
\]
the following payoff system must be satisfied%
\begin{equation}
\left[
\begin{array}
[c]{cccc}%
1 & -\left(  1-p_{n_{1}}\right)  & ... & 0\\
0 & 1 & ... & 0\\
... & ... & ... & ...\\
-\left(  1-p_{n_{K}}\right)  & 0 & ... & 1
\end{array}
\right]  \left[
\begin{array}
[c]{c}%
Z_{1}\\
Z_{2}\\
...\\
Z_{K}%
\end{array}
\right]  =\left[
\begin{array}
[c]{c}%
A_{1}\\
A_{2}\\
...\\
A_{K}%
\end{array}
\right]  . \label{eq03021}%
\end{equation}
The $Z_{k}$'s are the unknowns and, since each state $\mathbf{N}_i\left(\mathbf{s}_{k}\right)$ 
belongs to $S_{K-1}$, the $A_{k}$'s are known (but depending on
the $x_{\mathbf{s}_{k},i}$'s). $P_{1}$ wants to maximize $Z_{k}%
=V_{1,\mathbf{s}_{k}}$ (for all $k\in \left \{  1,...,K\right \}  $). We can
solve (\ref{eq03021})\ using Cramer's rule. We have
\[
\left \vert
\begin{array}
[c]{cccc}%
1 & 1-p_{n_{1}} & ... & 0\\
0 & 1 & ... & 0\\
... & ... & ... & ...\\
1-p_{n_{K}} & 0 & ... & 1
\end{array}
\right \vert =1-\left(  1-p_{n_{1}}\right)  \left(  1-p_{n_{2}}\right)
...\left(  1-p_{n_{K}}\right)  >0,
\]
and, expanding with respect to the first column, we have
\begin{equation}
V_{1,\mathbf{s}_{1}}=Z_{1}=\frac{\left \vert
\begin{array}
[c]{ccccc}%
A_{1} & 1-p_{n_{1}} & 0 & ... & 0\\
A_{2} & 1 & 1-p_{n_{2}} & ... & 0\\
A_{3} & 0 & 1 & ... & 0\\
... & ... & ... & ... & ...\\
A_{K} & 0 & 0 & ... & 1
\end{array}
\right \vert }{1-\left(  1-p_{n_{1}}\right)  \left(  1-p_{n_{2}}\right)
...\left(  1-p_{n_{K}}\right)  } \label{eq03034}%
\end{equation}
or
\begin{align*}
V_{1,\mathbf{s}_{1}}=Z_{1}  &  =\frac{A_{1}\left \vert
\begin{array}
[c]{cccc}%
1 & 1-p_{n_{2}} & ... & 0\\
0 & 1 & ... & 0\\
... & ... & ... & ...\\
0 & 0 & ... & 1
\end{array}
\right \vert -A_{2}D_{2}+A_{3}D_{3}-...}{1-\left(  1-p_{n_{1}}\right)  \left(
1-p_{n_{2}}\right)  ...\left(  1-p_{n_{K}}\right)  }\\
&  =\frac{
\sum_{i\in L_{1}\left(\mathbf{s}_{1}\right)}x_{\mathbf{s}_{1},i}p_{n_{1}}
V_{1,\mathbf{N}_i\left(\mathbf{s}_{1}\right)}+B
}
{1-\left(  1-p_{n_{1}}\right)  \left(  1-p_{n_{2}}\right)  ...\left(  1-p_{n_{K}}\right)  }.
\end{align*}
In the above, $D_{k}$ is the determinant of the submatrix obtained by removing
the first column and the $k$-th row ($k\in \left \{  2,...,K\right \}  $) in
(\ref{eq03034})\ and $B=\sum_{k=2}^{K}\left(  -1\right)  ^{k-1}A_{k}D_{k}$;
neither the $A_{k}$'s nor $B$ contain $P_{1}$'s shooting probabilities
$x_{\mathbf{s}_{1},i}$. Hence $P_{1}$ needs only to maximize 
$\sum_{i\in L_{1}\left(\mathbf{s}_1\right)}
x_{\mathbf{s}_{1},i}p_{n_{1}}
V_{1,\mathbf{N}_i\left(\mathbf{s}_{1}\right)}$. 
The rule to achieve this is simple:
\begin{align}
\widehat{i}  &  =\arg \max_{i\in L_{1}\left(\mathbf{s}_1\right)}
V_{1,\mathbf{N}_i\left(\mathbf{s}_{1}\right)},\label{eq03033}\\
\forall i  &  \neq \widehat{i}:x_{\mathbf{s}_{1},i}=0\text{ and }%
x_{\mathbf{s}_{1},\widehat{i}}=1.\label{eq03034}
\end{align}
In (\ref{eq03033}), $\arg \max_{i}$ is understood as the smallest $i$ which 
achieves the maximum (there may exist more than one such and this may result in more than one Nash equilibria).
After some additional algebra
it can be verified that this rule also maximizes $V_{1,\mathbf{s}_{k}}$ for
all remaining $k\in \left \{  2,...,K\right \}  $.

This completes the inductive proof for $P_{1}$'s equilibrium strategy
$\sigma_{1}$. The proof for all other players works the same way. 
Let us define a \emph{family} of rules $\mathbf{R}_k$ (for $K\in \left \{  2,...,N\right \}  $):
\begin{quote}
\hspace*{-5mm}$\mathbf{R}_{K}$: When $K$ players are alive, $P_{n}$ shoots at some $P_{i}$ whose elimination results in
a $\left(  K-1\right)$-uel with highest payoff to $P_{n}$.
\end{quote}
What we have proved is that, for all $N\in \left \{  2,3,...\right \}  $ and all
$n\in \left \{  1,...,N\right \}$, the family $\left(\mathbf{R}_{K}\right)  _{K=1}^{N}$ yields a
deterministic NE for the $N$-uel.
\end{proof}

\bigskip

\noindent \noindent The above proof also furnishes an algorithm for computing
the equilibrium $\widehat{\sigma}=\left(  \widehat{\sigma}_{1},...,\widehat
{\sigma}_{N}\right)  $

\bigskip

\begin{minipage}[c]{0.95\linewidth}
\begin{algorithm}[H]
\caption{Computation of Nash Equilbrium}
\label{alg:cap}
\begin{algorithmic}
\Function{NashCompute}{$N,\mathbf{p}=\left(  p_{1},...,p_{N}\right)  $}
\State Initialize $\widehat{\sigma}_{1}\left(  \mathbf{s}\right)  $, $...,\widehat
{\sigma}_{N}\left(  \mathbf{s}\right)  $ for all $\mathbf{s}\in S_{1}\cup
S_{2}$
\For{$K\in \left \{  3,...,N\right \}  $}
\State Compute $\widehat{\sigma}_{1}\left(  \mathbf{s}\right)  $, $...,\widehat
{\sigma}_{N}\left(  \mathbf{s}\right)  $ for all $\mathbf{s}\in S_{K}$
\EndFor $K$.
\EndFunction
\end{algorithmic}
\end{algorithm}
\end{minipage}

\bigskip

\noindent If the original assumptions are violated there is no guarantee that
Algorithm 3 will yield the  equilibrium of the $N$-uel. However, it is
worth noting that if Algorithm 3 terminates, it will always yield an
equilibrium; i.e., the algorithm \emph{may work} for combinations of $p_{1}$,
..., $p_{N}$ values which violate some of our original assumptions (e.g., with
some marksmanships equal to zero or to one).

As will be seen in the next section, the strongest opponent rule can result in
rather interesting behaviors for certain combinations of $p_{1}$, ..., $p_{N}$.

\section{Experiments\label{sec05}}

In this section we use computer simulation to present some interesting
cases of $N$-uels. In all the following tables $\mathbf{s}_{n}$ denotes 
the state $\left(  n,1,1,1,\cdots ,1\right)  $.

\subsection{$3$-uels\label{sec0501}}

We start with $3$-uels in which, as mentioned, every player's optimal
strategy is to shoot at his strongest opponent (and this holds for all
states belonging to $S_{3}$).

\subsubsection{Strongest Player Has Highest / Lowest Winning Probability\label{sec050101}}

In Table \ref{tab01} we see a case where the strongest player $P_{\max}=P_1$, 
i.e., the one with highest markmanship, 
has the greatest expected payoff
regardless of who has the first move (note that, for each player, the optimal
strategy is the same for every state: shoot at your strongest opponent).
While this may seem natural, 
it is actually the exception and not the rule, as one might expect.

\begin{table}[ht]
\centering
\caption{Strongest player has the greatest expected payoff.}
{\begin{tabular}{@{}rrrr@{}} 
\hline
$n$ & $1$ & $2$ & $3$\\ \hline
$p_{n}$ & $0.90$ & $0.10$ & $0.20$\\ 
$\widehat{\sigma}_{n}\left(  \mathbf{s}_{n}\right)  $ & $3$ & $1$ & $1$\\ 
$V_{n,1111}$ & $0.86$ & $0.12$ & $0.02$\\ 
$V_{n,2111}$ & $0.62$ & $0.18$ & $0.20$\\ 
$V_{n,3111}$ & $0.69$ & $0.16$ & $0.15$\\ 
\hline
\end{tabular}
}
\label{tab01}
\end{table}

In Table \ref{tab02} we see that the player $P_{\max}=P_{3}$ does
\emph{not} have the highest expected payoff. In fact, if he does not have the
first move, he has the \emph{lowest} expected payoff. This is because each
player $P_{i}\neq P_{\max}$ will shoot at $P_{\max}$ (by the optimal strategy
of shooting at the strongest opponent) and he has a high probability of dying
before he has a chance to shoot back. Hence, the \textquotedblleft
team\textquotedblright \ consisting of the two players with lowest marksmanship
has a better survival probability than the $P_{\max}$ playing alone.

\begin{table}[ht]
\centering
\caption{Strongest player does not have  the greatest expected payoff.}
{\begin{tabular}{@{}rrrr@{}} 
\hline
$n$ & $1$ & $2$ & $3$\\ \hline
$p_{n}$ & $0.50$ & $0.70$ & $0.95$\\ \hline
$\widehat{\sigma}_{n}\left(  \mathbf{s}_{n}\right)  $ & $3$ & $3$ & $2$\\ 
$V_{n,1111}$ & $0.37$ & $0.56$ & $0.07$\\ 
$V_{n,2111}$ & $0.56$ & $0.30$ & $0.14$\\ 
$V_{n,3111}$ & $0.50$ & $0.03$ & $0.47$\\ 
\hline
\end{tabular}
}
\label{tab02}
\end{table}

\noindent This is one of many cases in which the strongest player $P_{3}$ has
the lowest expected payoff when he does not have the first move. And even when
he does have the first move, $P_{1}$ has greatest expected payoff. On the
other hand, $P_{1}$ has higher expected payoff when $P_{2}$ has the move, and
$P_{2}$ has higher  payoff when $P_{1}$ has the move.

\subsubsection{Zugzwang\label{sec050104}}

Consider the case $p_{1}=p_{2}=p_{3}=1$, i.e., every player has perfect
markmanship. Without loss of generality, we assume that $\hat{\sigma}%
_{n}=\mathbf{N}\left(  n111\right)  $ meaning that each $P_{n}$ shoots at the
next player (actually, it makes no difference to $P_{n}$ which player he will
shoot at). Let $P_{n}$ be the player who has the first move. $P_{n}$ will
always lose, no mater what strategy he uses, since, after killing his first
target he will fight a $2$-uel in which his opponent will have the first move
\emph{and} perfect markmanship and so will certainly kill $P_{n}$. These facts
are illustrated in Table \ref{tab03}. 

\begin{table}[H]
\centering
\caption{Player who has the first move loses.}
{\begin{tabular}{@{}rrrr@{}} 
\hline
$n$ & $1$ & $2$ & $3$\\ \hline
$p_{n}$ & $1.00$ & $1.00$ & $1.00$\\ 
$\widehat{\sigma}_{n}\left(  n111\right)  $ & $2$ & $3$ & $1$\\
$V_{n,1111}$ & $0.00$ & $0.00$ & $1.00$\\
$V_{n,2111}$ & $1.00$ & $0.00$ & $0.00$\\
$V_{n,3111}$ & $0.00$ & $1.00$ & $0.00$\\
\hline
\end{tabular}
}
\label{tab03}
\end{table}

This resembles a \emph{zugswang} position in
chess, i.e., a position in which a player will necessarily lose if he moves
\emph{any} of his pieces, whereas he would not necessarily lose if he could
pass (not move any piece).

\subsubsection{Being Weaker May Increase Payoff\label{sec050105}}

This example is a continuation of the previous one. In Figure \ref{fig04} we see that
$P_{1}$'s probability of winning is a \emph{decreasing} function of
marksmanship $p_{1}$ in the interval $\left[  0.5,1.0\right]  $, when
$p_{2}=p_{3}=1$.\textbf{ }In other words, having a lower marksmanship can
increase one's probability of winning (and this is true regardless of which
player has the first move).

\begin{figure}[hb]
\centerline{\includegraphics[width=3in]{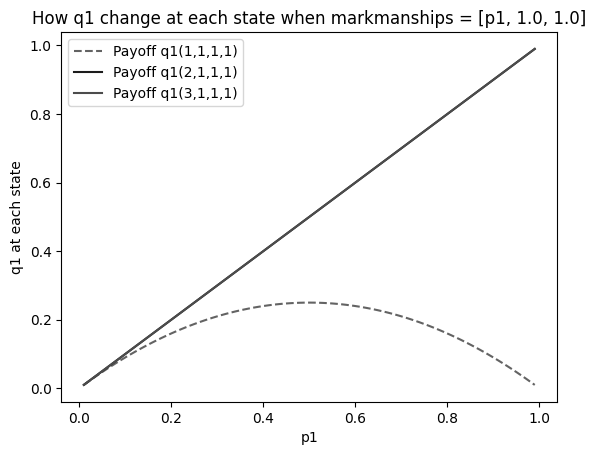}}
\caption{$P_1$'s payoff is a decreasing function ofhis marksmanship.}
\label{fig04}
\end{figure}

\subsection{4-uels\label{sec0502}}

\subsubsection{Shooting Weakest Opponent May Yield Maximum Payoff\label{sec050201}}

Adding one more player with $p_{4}<p_{1}$ at the experiment of Section
\ref{sec050105}, we end up with a paradoxical 4-uel. Consider the example of
the following Table \ref{tab04}.

\begin{table}[H]
\centering
\caption{Optimal strategy is to shoot at weakest player.}
{\begin{tabular}{@{}rrrrr@{}} 
\hline
$n$ & $1$ & $2$ & $3$ & $4$\\ \hline
$p_{n}$ & $0.70$ & $1.00$ & $1.00$ & $0.50$\\
$\widehat{\sigma}_{n}\left(  \mathbf{s}_{n}\right)  $ & $4$ & $3$ & $4$ & $2$\\ 
$V_{n,11111}$ & $0.66$ & $0.23$ & $0.00$ & $0.11$\\ 
\hline
\end{tabular}
}
\label{tab04}
\end{table}

\noindent$P_{1}$'s optimal strategy at $\mathbf{s=}11111$ is to shoot at the
weakest player. Taking the possible $3$-uels in which $P_{1}$ can end up if he
shoots successfully we have the following cases.

\begin{enumerate}
\item With $\sigma_{1}\left(  11111\right)  =2$, if $P_{1}$ kills $P_{2}$ he
ends up in a truel , similar to that of Section \ref{sec050101}, where $P_{3}$
with $p_{3}=1.00$ has the move and $P_{1}$ is the second best player. Hence,
the optimal strategy for $P_{3}$ is to shoot at $P_{1}\ $and $P_{1}$ loses in
the overall $4$-uel, i.e., $V_{3,1011}=0.$

\item With $\sigma_{1}\left(  11111\right)  =3$, $P_{1}$ ends up in a similar
$3$-uel where $P_{2}$ always shoots and kills $P_{1}$.

\item With $\sigma_{1}\left(  11111\right)  =4$, $P_{1}$ ends up in a truel
similar to the one of Section \ref{sec050105}, where he does not have the
first move and achieves $V_{1,21110}=0.66$. Hence this strategy yields the
best possible payoff to $P_{1}$.
\end{enumerate}

Note that in this example $P_{3}$'s best strategy is also to shoot at the
weakest player, because the probability of $P_{1}$ shooting successfully at
$P_{2}$ after that is high.

\subsubsection{Payoffs as Functions of Marksmanships $p_{1}$  and $p_{4}$ \label{sec050202}}

In this example we generalize the results of Section \ref{sec050201}. In
particular, we assume that $p_{2}=p_{3}=1$ and we study the dependence of the
payoffs to the \textquotedblleft nonperfect\textquotedblright \ players $P_{1}$
and $P_{4}$ on their marksmanships $p_{1}$ and $p_{4}$. 

\begin{figure}[ht]
\centerline{
  \subfigure[]
     {\includegraphics[width=2.5in]{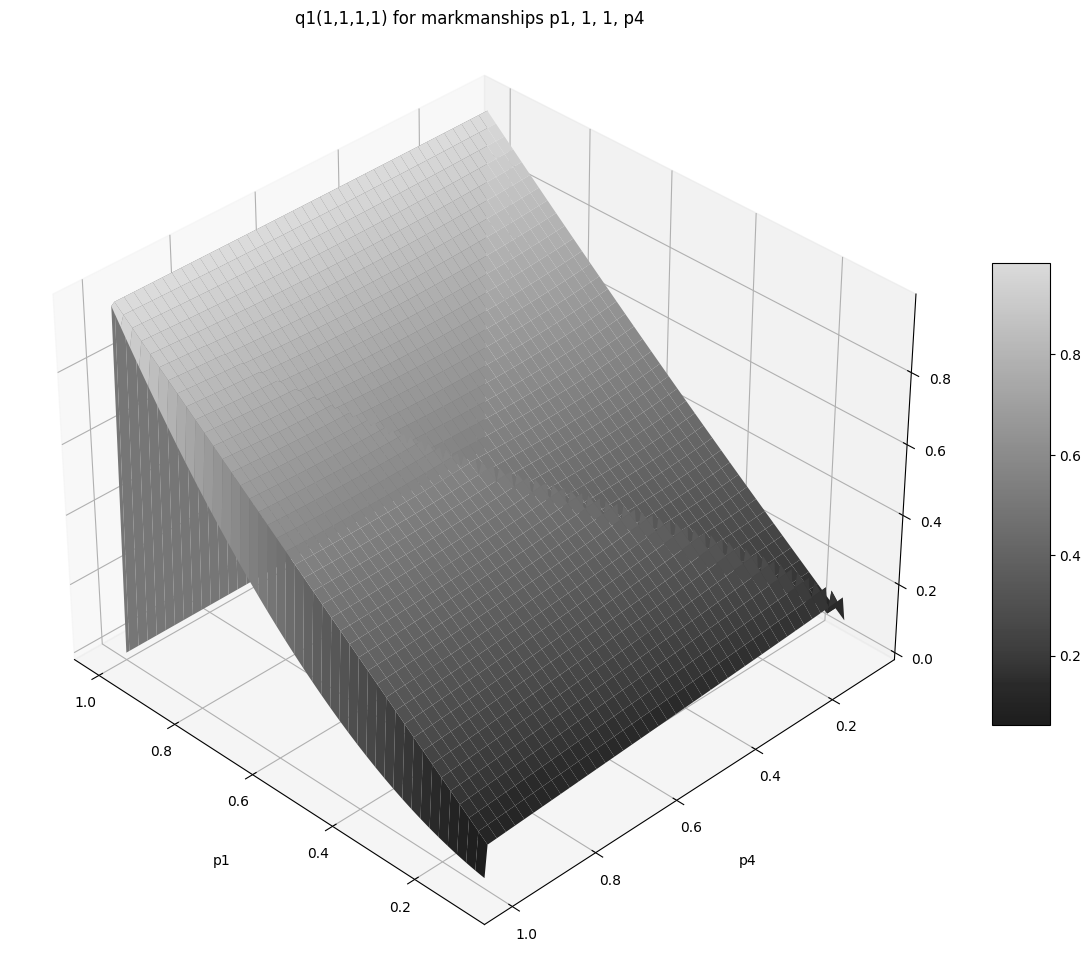}\label{fig05a}}
  \hspace*{4pt}
  \subfigure[]
     {\includegraphics[width=2.5in]{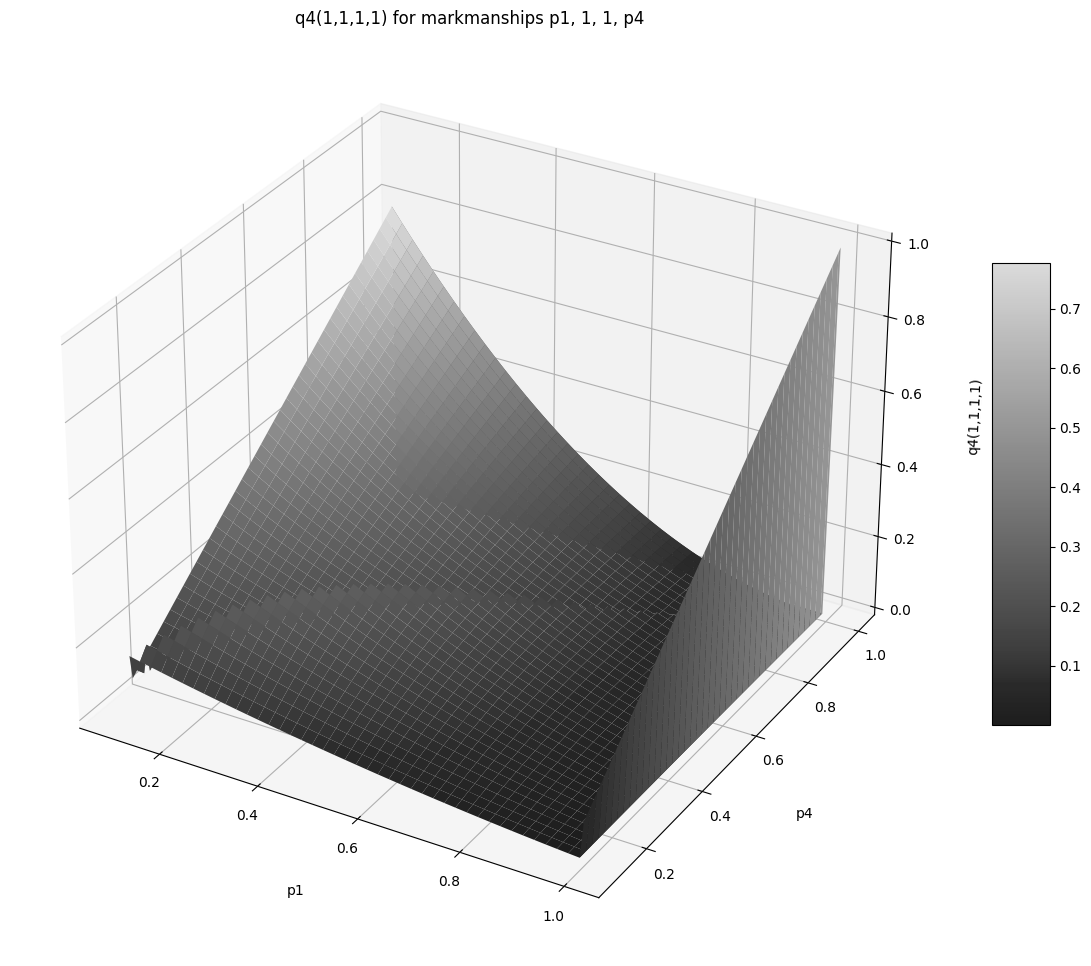}\label{fig05b}}
}
\caption{Payoffs $V_{1,11111}$ and $V_{4,11111}$ as functions of $p_{1}$ and $p_{4}$.}
\label{fig05}
\end{figure}

The surface in Figure \ref{fig05a} (resp. in Figure \ref{fig05b}) 
is $V_{1,11111}$ (resp. $V_{4,11111}$) as a function of $p_{1}$ and $p_{4}$, when $p_{2}=p_{3}=1$. 
In both figures
we have a discontinuity at $p_{1}=p_{4}$. This is due to the fact that players
change strategies. Taking $p_{4}>p_{1}$ in the $3$-uel starting at, for
example, $\mathbf{s}=21101$, $P_{4}$ is $P_{2}$'s new target as he is the next
strongest player after $P_{2}$.

\subsubsection{Circular Shooting Sequences and Formation of Teams\label{sec050203}}

We now present two examples in which we focus on the players' strategies
rather than their payoffs. Taking $\mathbf{p=}\left(
0.80,0.40,0.85,0.50\right)  $, we get the following optimal strategies.

\begin{table}[H]
\centering
\caption{An example of circular shooting and team formation.}
{\begin{tabular}{@{}rrrrr@{}} 
\hline
$n$ & $1$ & $2$ & $3$ & $4$\\ \hline
$p_{n}$ & $0.80$ & $0.40$ & $0.85$ & $0.50$\\ 
$\widehat{\sigma}_{n}\left(  \mathbf{s}_{n}\right)  $ & $4$ & $3$ & $1$ & $2$\\
\hline
\end{tabular}
}
\label{tab05}
\end{table}

\noindent Taking $\mathbf{p=}\left(  0.75,0.25,1.00,0.50\right)  $, we get the
following optimal strategies.

\begin{table}[H]
\centering
\caption{Another example of circular shooting and team formation.}
{\begin{tabular}{@{}rrrrr@{}} 
\hline
$n$ & $1$ & $2$ & $3$ & $4$\\ \hline
$p_{n}$ & $0.75$ & $0.25$ & $1.00$ & $0.50$\\ 
$\widehat{\sigma}_{n}\left(  \mathbf{s}_{n}\right)  $ & $4$ & $3$ & $1$ & $2$\\ 
\hline
\end{tabular}
}
\label{tab06}
\end{table}

\noindent In both cases the optimal strategy profile follows a \emph{circular
shooting order:}%
\[
P_{1}\underset{\text{shoots at}}{\rightarrow}P_{4}\underset{\text{shoots at}%
}{\rightarrow}P_{2}\underset{\text{shoots at}}{\rightarrow}P_{3}%
\underset{\text{shoots at}}{\rightarrow}P_{1}%
\]
More generally, in all $4$-uel experiments where we have a circular strategy,
we can consider the players to be forming two teams. In this experiment the
teams are $\{P_{1},P_{2}\}$ and $\{P_{3},P_{4}\}$; the optimal strategy for
each member of each team is to shoot at his teammate's shooter.

\begin{center}

\end{center}

\subsubsection{Solidarity of the Weakest}

In our final $4$-uel example, taking $\mathbf{p=}\left(
0.05,0.10,0.15,0.70\right) $ we get a situation which resembles an economic
model with three small businesses $P_{1},P_{2}$ and $P_{3}$ and a large one
$P_{4}$. The optimal strategy for the large business $P_{4}$ is to eliminate
his strongest opponent, hence he shoots at $P_{3}$. The other three players
are so weak that they do not want to lead the game to a truel similar to the
experiment of Section \ref{sec050101}, where the strongest player has the
greatest payoff (for example, $P_{2}$ will not shoot $P_{1}$). Consequently,
the three weak players ($P_{1},P_{2}$ and $P_{3}$)\ cooperate to eliminate the
strong player. The optimal strategies and respective payoffs are shown in
Table \ref{tab07}. Note that $P_{4}$ has the greater expected payoff in each case and,
in fact, his payoff is either very close or higher than $0.5$.

\begin{table}[H]
\centering
\caption{Solidarity of the weakest players.}
{\begin{tabular}{@{}rrrrr@{}} 
\hline
$n$ & $1$ & $2$ & $3$ & $4$\\ \hline
$p_{n}$ & $0.05$ & $0.10$ & $0.15$ & $0.70$\\ 
$\widehat{\sigma}_{n}\left(  \mathbf{s}_{n}\right)  $ & $4$ & $4$ & $4$ & $3$\\ 
$V_{n,11111}$ & $0.18$ & $0.20$ & $0.13$ & $0.49$\\ 
$V_{n,21111}$ & $0.18$ & $0.19$ & $0.12$ & $0.51$\\ 
$V_{n,31111}$ & $0.16$ & $0.18$ & $0.09$ & $0.57$\\ 
$V_{n,41111}$ & $0.14$ & $0.15$ & $0.04$ & $0.67$\\ 
\hline
\end{tabular}
}
\label{tab07}
\end{table}

In $N$-uels of this type, we see a \emph{solidarity} effect between the
weakest players. However, it must be noted that such $N$-uels are a small, not
representative, subset of the possible cases, as we have seen from our
previous examples.

\subsection{N-uels, $N\geq5$}

We conclude with two examples where, similar to the example of Section
\ref{sec050203}, we obtain circular shooting sequences. In the $5$-uel with
$\mathbf{p}=\left(  0.25,0.20,0.10,0.06,0.04\right)  $, computation shows that
each player's optimal strategy is to shoot at the next player and the last
shoot at the first.

\begin{table}[H]
\centering
\caption{An example of circular shooting.}
{\begin{tabular}{@{}rrrrrr@{}} 
\hline
$n$ & $1$ & $2$ & $3$ & $4$ & $5$\\ \hline
$p_{n}$ & $0.25$ & $0.20$ & $0.10$ & $0.06$ & $0.04$\\ 
$\widehat{\sigma}_{n}\left(  \mathbf{s}_{n}\right)  $ & $2$ & $3$ & $4$ & $5$ & $1$\\
\hline
\end{tabular}
}
\label{tab08}
\end{table}

\noindent Here is a $7$-uel which has a similar effect.

\begin{table}[H]
\centering
\caption{An example of circular shooting.}
{\begin{tabular}{@{}rrrrrrrr@{}} 
\hline
$n$ & $1$ & $2$ & $3$ & $4$ & $5$ & $6$ & $7$\\ \hline
$p_{n}$ & $0.40$ & $0.26$ & $0.18$ & $0.12$ & $0.08$ & $0.06$ & $0.04$\\ 
$\widehat{\sigma}_{n}\left(  \mathbf{s}_{n}\right)  $ & $4$ & $1$ & $5$ & $6$ & $2$ & $7$ & $3$\\
\hline
\end{tabular}
}
\label{tab09}
\end{table}

\noindent While we have discovered many similar examples by brute force
computation, we have not been able to obtain a condition on the $\mathbf{p}$
values which guarantees the emergence of circular shooting orders. Also, it is
not obvious what shooting order will emerge as soon as one player is
eliminated; it is not usually the case that the resultant shooting order will
again be circular. In the future we intend to further study these questions.

\section{Conclusion\label{sec06}}

We have studied a $N$-uel game (a generalization of the \emph{duel}) in which
finding a Nash equilibrium reduces to solving the system of nonlinear payoff
equations. We have proved that this system has a  solution (hence the
$N$-uel has a  stationary Nash equilibrium) and we have provided
algorithms for its computational solution. In the future we want to study

\begin{enumerate}
\item the existence and computation of \emph{nonstationary} Nash equilibria.

\item the properties of the $N$-uel variant in which each player can
\emph{abstain}, i.e., not shoot at any of his opponents.
\end{enumerate}

\bibliographystyle{ws-rv-van}

\end{document}